\tikzset{
	>=stealth',
	help lines/.style={dashed, thick},
	important line/.style={thick},
	connection/.style={thick, dotted},
}
\DeclareMathAlphabet{\mymathbb}{U}{BOONDOX-ds}{m}{n}
\newtheorem{thm}{Theorem}
\newtheorem{theorem}[thm]{Theorem}
\newtheorem{lemma}[thm]{Lemma}
\newtheorem{definition}[thm]{Definition}
\newcommand{\lop}[1]{\overline{#1}}
\newcommand{\lket}[1]{\ket{\overline{#1}}}
\newcommand{\lcnot}{\overline{\text{CNOT}}}
\newcommand{\pth}{p_\text{th}}
\newcommand{\pthclifforddeeper}{0.690(2)\%}
\newcommand{\pthclifforddeep}{0.718(2)\%}
\newcommand{\pthmemory}{0.808(1)\%}
\newcommand{\pthdistillation}{0.817(4)\%}
\newcommand{\nmeas}{m}
\newcommand{\plogical}{P_{L}}
\newcommand{%
  \immediate\write18{texcount -1 -sum -merge -q .tex output.bbl > -words.sum }%
  \input{-words.sum} words%
}[1]{%
  \immediate\write18{texcount -1 -sum -merge -q #1.tex output.bbl > #1-words.sum }%
  \input{#1-words.sum} words%
}
\newcommand{%
  \immediate\write18{texcount -1 -sum -merge -char -q .tex output.bbl > -chars.sum }%
  \input{-chars.sum} characters (not including spaces)%
}[1]{%
  \immediate\write18{texcount -1 -sum -merge -char -q #1.tex output.bbl > #1-chars.sum }%
  \input{#1-chars.sum} characters (not including spaces)%
}
\begin{document}

\title{Fast correlated decoding of transversal logical algorithms}

\author{
Madelyn~Cain$^{1,*,\dagger}$, 
Dolev~Bluvstein$^{1,*}$, 
Chen Zhao$^{2,*}$, 
Shouzhen~Gu$^{3, 4}$, 
Nishad~Maskara$^{1}$, 
Marcin~Kalinowski$^{1}$, 
Alexandra~A.~Geim$^{1}$, 
Aleksander~Kubica$^{3,4}$, 
Mikhail~D.~Lukin$^{1,\ddagger}$, 
and Hengyun~Zhou$^{2,\S}$
}

\affiliation{
$^1$Department of Physics, Harvard University, Cambridge, Massachusetts 02138, USA\\
$^2$QuEra Computing Inc., 1284 Soldiers Field Road, Boston, MA, 02135, US\\
$^3$Department of Applied Physics, Yale University, New Haven, Connecticut 06511, USA\\
$^4$Yale Quantum Institute, Yale University, New Haven, Connecticut 06511, USA\\
$^*$These authors contributed equally, $^\dagger$maddiecain10@gmail.com,
$^\ddagger$lukin@physics.harvard.edu,
$^\S$hyzhou@quera.com
}

\date{\today}

\begin{abstract}
Quantum error correction (QEC) is required for large-scale computation, but incurs a significant resource overhead. 
Recent advances have shown that by jointly decoding logical qubits in algorithms composed of transversal gates, the number of syndrome extraction rounds can be reduced by a factor of the code distance $d$, at the cost of increased classical decoding complexity.
Here, we reformulate the problem of decoding transversal circuits by directly decoding relevant logical operator products as they propagate through the circuit.
This procedure transforms the decoding task into one closely resembling that of a single-qubit memory propagating through time. 
The resulting approach leads to fast decoding and reduced problem size while maintaining high performance. 
Focusing on the surface code, we prove that this method enables fault-tolerant decoding with minimum-weight perfect matching, and benchmark its performance on example circuits including magic state distillation.
We find that the threshold is comparable to that of a single-qubit memory, and that the total decoding run time can be, in fact, less than that of conventional lattice surgery. 
Our approach enables fast correlated decoding, providing a pathway to directly extend single-qubit QEC techniques to transversal algorithms.
\end{abstract}

\maketitle
\section{Introduction}
Quantum error correction (QEC) is believed to be essential for large-scale quantum computation~\cite{shor1994algorithms, preskill1998reliable, dennis2002topological, nielsen2010quantum, kitaev2003fault}.
Recent experiments have realized QEC across several different systems, demonstrating the hallmark exponential error suppression with increasing code distance $d$ below threshold physical error rates, and enabling higher-fidelity execution of tailored algorithms~\cite{bluvstein2024logical,acharya2024quantum,reichardt2024logical,ryan-anderson2024high,ryan-anderson2022implementing}.
These advances mark a turning point, as efficient QEC techniques become paramount to implementing error-corrected quantum algorithms in practical systems.
Moreover, these experiments highlight the central role of the QEC \textit{decoder}, a classical algorithm that uses measurement results to infer and correct errors. 
The decoder has a significant impact on the practical performance of computation with logical qubits: its accuracy affects whether a system is below the threshold, while its speed directly enters into the execution speed of the computation. 

In particular, a key tool leveraged in these experiments is transversal gates, which implement a logical operation by applying tensor products of single- or two-qubit physical gates~\cite{bluvstein2022quantum, postler2022demonstration, ryan-anderson2022implementing, honciuc2024implementing, wang2023fault, bluvstein2024logical}. 
Recent advances have shown that the number of syndrome extraction (SE) rounds required per transversal gate can be significantly reduced from $O(d)$ to $O(1)$ using correlated decoding, in which multiple logical qubits are decoded jointly~\cite{bluvstein2024logical, cain2024correlated, zhou2024algorithmic}.
However, existing strategies for correlated decoding face key limitations: either the performance is reduced, or the run time rapidly increases with system size. 
These trade-offs stem from two core challenges.
First, \textit{hyperedges} in the decoding graph arising from syndrome measurement errors between transversal operations~\cite{cain2024correlated,sahay2024error,wan2024iterative,wu2024hypergraph} prevent the use of fast and well-understood decoders based on minimum-weight perfect matching (MWPM)~\cite{dennis2002topological,higgott2025sparse,wu2023fusion}.
Second, existing approaches assume jointly decoding the entire circuit in order to ensure fault tolerance, resulting in a rapidly increasing decoding volume. 
These difficulties raise fundamental questions about the trade-offs between quantum resources and classical decoding complexity, and pose significant practical barriers to decoding large-scale algorithms with transversal gates.

In this Article, we present a strategy to decode individual \textit{logical operator products}, rather than the circuit as a whole.
By isolating to only a subset of syndrome measurements, the hyperedges from measurement errors are directly reduced to simple edges, such that the decoding problem closely resembles that of a single logical qubit propagating through time.
Specializing to the surface code, this enables the entire circuit to be decoded using MWPM.
In many practical settings, the decoding volume is substantially reduced, and the corrections can be committed in software such that stabilizers do not need to be re-decoded.
Furthermore, because the number of stabilizer measurements is reduced by a factor of $d$, we find that the total decoding runtime can be, in fact, less than that of conventional lattice surgery.
We prove the fault tolerance of our decoding strategy, and benchmark its performance on example circuits, including random Clifford circuits and magic state distillation.
Our results demonstrate thresholds and decoding run times comparable to those of a single-qubit memory and modular lattice surgery decoding, respectively~\cite{bombin2023modular,tan2023scalable,skoric2023parallel}. 
These findings establish procedures for fast and accurate decoding of transversal algorithms, and provide a framework for extending QEC techniques from the memory setting to large-scale logical algorithms.

\section{Decoding strategy}

\subsection{Decoding reliable logical products \label{subsec:decoding_strategy}}
Universal quantum computation can be performed via an adaptive transversal Clifford circuit acting on logical Pauli states ($\lket{0}$ or $\lket{+}$) and magic states \mbox{$\lket{T} =(\lket{0} + e^{i\pi/4}\lket{1})/\sqrt{2}$} with $\lop{Z}$ and $\lop{X}$ basis measurements [Fig.~\ref{fig:figure_1}(a)], where the overline indicates logical qubits.
Interestingly, such universal circuits can be fault-tolerantly implemented with only $O(1)$ SE rounds per transversal logical operation and logical Pauli state initialization, assuming the $\lket{T}$ states are prepared fault-tolerantly~\cite{zhou2024algorithmic}.
The essence of this can be understood by tracking how logical Pauli operators propagate through the circuit. 
Although the circuit generally involves mid-circuit measurements and conditional gates, upon execution it is a transversal Clifford circuit (acting on Pauli and non-Pauli inputs), so such operators can be tracked deterministically.
Each measurement, or more generally \textit{products} of measurements, has an associated logical Pauli operator which can be back-propagated through the circuit.
Back-propagated measurements which anti-commute with a logical Pauli initialization are 50/50 random.
As such, these measurements do not provide information about the quantum state; individually, they can be assigned uniformly at random $\pm 1$ outcomes and thus do not need to be decoded.
All other measurement products, which terminate on $\lket{T}$ states or logical Pauli states in the same basis, contain non-trivial information.
A basis of these non-trivial measurements must therefore be decoded reliably in order to reproduce the logical measurement distribution of the ideal circuit.

\begin{figure}[t!]
    \centering
    \includegraphics{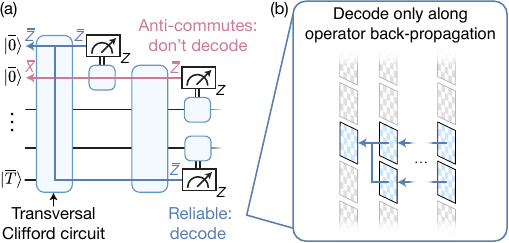}
    \caption{Decoding transversal algorithms.
    (a)~Logical measurements in universal computation can be correctly predicted by decoding \textit{reliable logical Pauli products} which, when back-propagated through the circuit, terminate at $\lket{T}$ states or logical Pauli states in the same basis.
    All other measurements can be assigned uniformly at random $\pm 1$ outcomes, as they terminate at a logical Pauli initialization in the anti-commuting basis.
    (b)~The reliable logical Pauli products can be decoded by tracing back their evolution through the circuit and decoding only the stabilizers along the resulting propagation path. 
    }
    \label{fig:figure_1}
\end{figure}

We refer to these measurement products of interest as \textit{reliable logical Pauli products}, because the same condition that makes the logical state well-defined also ensures their stabilizers at initialization are deterministically $+1$ (Sec.~\ref{subsec:fault_tolerance}).
Crucially, we find that for any Calderbank-Shor-Steane (CSS) code, these reliable Pauli products can be determined using only the stabilizer measurements along the back-propagation path of the operator [Fig.~\ref{fig:figure_1}(b)]. 
Restricting to these relevant stabilizers greatly simplifies the decoding problem, reducing its size while maintaining high performance.

We now outline our decoding strategy.
For each new logical measurement, the algorithm takes in a description of the circuit and the current measurement snapshots, and outputs an assignment of the new logical measurement using the following steps.
\begin{enumerate}
    \item Find a basis of logical measurement products over the  measurements so far, where each element is either a single measurement that is independently 50/50 random, or a reliable Pauli product (Sec.~\ref{sec:proof}).  
    If the current measurement is not part of a reliable Pauli product, assign a $\pm1$ value uniformly at random, and skip the remaining steps. 
    \item  Choose a reliable logical Pauli product that includes the current measurement and potentially already-assigned past measurements, and back-propagate it through the circuit.
    Record all stabilizer measurements in the same basis along the propagation path.
    \item Decode the reliable logical Pauli product using only these relevant stabilizer measurements.
    In the case of the surface code, MWPM can be used (Sec.~\ref{subsec:matchable}). 
    \item (Optional) In certain cases (see Sec.~\ref{subsec:volume}), the corrections can be committed in software, reducing the size of future decoding problems.
    \item Assign the current logical measurement result, which is equal to the product of the decoded reliable logical Pauli product and any already-assigned measurements in the product (including previous randomly-assigned measurements). 
\end{enumerate}
We give an explicit example of this strategy in Appendix~\ref{appendix:decoding_example}.
By calling this procedure each time a new logical qubit is measured, the logical bit string for the entire circuit is sampled from the ideal distribution.
Interestingly, running the procedure again on the same measured data can produce different values for individual logical measurement due to the classically-generated 50/50 randomness.
Nevertheless, the meaningful information, i.e.,~the values of the reliable logical Pauli products, are always the same.

In the following subsections, we will explain these steps in detail, including how this approach removes hyperedges from the decoding problem (Sec.~\ref{subsec:matchable}), enables reliable interpretation of logical measurement results (Sec.~\ref{subsec:fault_tolerance}), and can reduce the decoding volume (Sec.~\ref{subsec:volume}). 
We benchmark the performance of our approach numerically in Sec.~\ref{sec:numerics} and prove its fault tolerance in Sec.~\ref{sec:proof}.
These results establish a theoretical foundation for fast and accurate correlated decoding.

\begin{figure}[t!]
    \centering
    \includegraphics{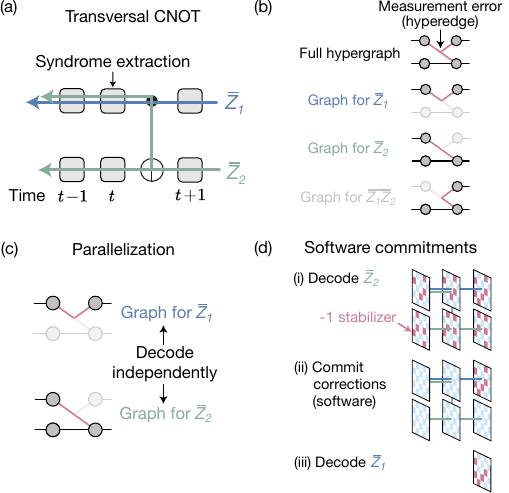}
    \caption{Matchable decoding and correction strategies.
    (a)~Logical Pauli products are decoded using stabilizer measurements along their back-propagation through the circuit. 
    (b)~During a transversal $\lcnot$, the checks (vertices) are $Z_1^{t-1}Z_1^{t}$ (top left), $Z_1^{t}Z_1^{t+1}$ (top right),  $Z_2^{t-1}Z_2^{t}$ (bottom left) and $Z_1^{t}Z_2^{t}Z_2^{t+1}$ (bottom right). 
    A measurement error on $Z_1^{t}$ (pink hyperedge) therefore flips three checks, complicating the decoding problem.
    By decoding only the checks along the back-propagation path of the logical operator, stabilizer measurement errors flip only two checks, enabling efficient decoding via MWPM.
    (c)~Logical Pauli products can be decoded independently in parallel.
    (d)~For some circuits (see Sec.~\ref{subsec:matchable}), the corrections for a logical Pauli product can be committed after decoding (i-ii), reducing the size of future decoding problems (iii). 
    }
    \label{fig:figure_2}
\end{figure}

\subsection{Constructing a matchable decoding problem \label{subsec:matchable}}
Here we describe how to remove hyperedges from the decoding problem.
We will decode a reliable Pauli product using only the stabilizer measurements in the same basis as the instantaneous logical operator along its backwards-propagated path through the circuit.
Only these stabilizer measurements are necessary: any error which can corrupt the logical product at its final measurement must also be detected by these stabilizers (Sec.~\ref{sec:proof}, Lemma~\ref{lemma:complete_subgraph}).

The decoder takes as input the stabilizer measurements and a \textit{decoding hypergraph}, which summarizes the circuit error model. 
The vertices of the hypergraph represent checks.
For simplicity, here we assume that at least one SE round occurs between transversal gates at each time step $t$, and discuss generalizations to multiple gates per round in Lemma~\ref{lemma:fewer_se} of Appendix~\ref{appendix:proof}.
Each check is then equal to the product of a stabilizer measurement and its backwards-propagated operator at the previous time step, if one exists. 
A check will flip from $+1$ to $-1$ if it detects an error.
Errors correspond to hyperedges connecting the checks they flip.

In practice, the weight of the hyperedges (how many vertices they flip) has significant implications on the complexity of the decoding problem. 
Circuits where all hyperedges have weight at most two (\textit{simple} edges) can be efficiently decoded using algorithms such as MWPM~\cite{dennis2002topological,higgott2025sparse,wu2023fusion} and union find~\cite{delfosse2017almost}. 
For example, these decoders can be applied to surface code computations with lattice surgery gates, where both data qubit and stabilizer measurement errors correspond to simple edges (any higher-weight hyperedges arising from, e.g.,~correlated errors can always be decomposed into these simple edges).
Higher-weight hyperedges for which this decomposition is not possible require more complex algorithms and can incur significantly longer run times in practice~\cite{panteleev2019degenerate,roffe2020decoding,delfosse2022toward,cain2024correlated,wu2024hypergraph}.

Stabilizer measurement errors are simple edges in a memory setting because a stabilizer measurement error will only flip two checks, comparing its measurement at time $t$ to times $t-1$ and $t+1$.
In contrast, transversal gates correlate stabilizer values across multiple logical qubits, such that a single stabilizer measurement error can lead to a weight-three hyperedge that is irreducible when decoding the entire logical circuit~\cite{cain2024correlated}.
To see this, consider a transversal $\lcnot$ with three rounds of surrounding $Z$ stabilizer measurements, as shown in Fig.~\ref{fig:figure_2}(a).
As illustrated in Fig.~\ref{fig:figure_2}(b) (top), the control qubit has check vertices $Z_1^{t-1}Z_1^{t}$ (top left) and $Z_1^{t}Z_1^{t+1}$ (top right), and the target qubit has checks $Z_2^{t-1}Z_2^{t}$ (bottom left) and $Z_1^{t}Z_2^{t}Z_2^{t+1}$ (bottom right) ($Z_i^{t}$ denotes a generic $Z$ stabilizer on logical qubit $i$ at time $t$).
Therefore, a measurement error on $Z_1^{t}$ will flip three checks (pink hyperedge). 

However, by decoding only the checks that include stabilizer measurements relevant to a logical Pauli product, such hyperedges are entirely avoided.
Crucially, only two of the three checks in the hyperedge are required for any choice of logical product, reducing the hyperedge weight to two (Fig.~\ref{fig:figure_2}(b), bottom).
This simplification arises from the fact that transversal Clifford gates transform the logical operator and stabilizers in the same way.
As a result, the checks can be chosen to track how the logical locally transforms between $t- 1$ and $t$, as well as $t$ and $t+1$, similar to the case of a single-qubit memory.
Because each stabilizer is only involved in two checks, if it is measured incorrectly only these two checks will flip, resulting in a simple edge.
In Appendix~\ref{appendix:decoding_hypergraph}, we show this pattern explicitly for the remaining transversal Clifford gates in the surface code, including the Hadamard $\lop{H}$ and phase $\lop{S}$ gates.

These observations apply broadly to transversal gates in CSS codes. 
Again, the decoding hypergraph will track the logical Pauli product through space and time. 
The space-like hyperedges, due to physical errors on the data qubits, will have the same structure as data qubit errors on a single copy of the original code without any logic gates. 
The time-like edges, corresponding to stabilizer measurement errors, will always be simple edges. 
Thus, we expect that in many cases, standard decoders used to decode a single logical qubit of a particular code can be promoted to decode transversal algorithms. 
For the two-dimensional surface and color codes, variants of MWPM can be applied~\cite{dennis2002topological,kubica2023efficient,delfosse2014decoding,lee2025color,gidney2023new}, enabling fast correlated decoding of the transversal logical algorithm. 

\subsection{Fault tolerance of the decoding strategy \label{subsec:fault_tolerance}}
The preceding discussion shows that the decoder can operate with the fast speed of MWPM.
Here, we explain why it is also fault-tolerant and achieves distance $O(d)$, akin to the results in Ref.~\cite{zhou2024algorithmic}.
We prove these findings in Sec.~\ref{sec:proof}.

One might initially be concerned that only $O(1)$ SE rounds between initialization and measurement, as opposed to $O(d)$, may not be sufficient maintain fault tolerance against stabilizer measurement errors. 
This originates from the way logical Pauli states are initialized: to prepare logical $\lket{0}$ ($\lket{+}$), all of the physical qubits in the code patch are initialized in $\ket{0}$ ($\ket{+}$). 
As a result, stabilizers in the initialization basis are in their $+1$ eigenstates, but stabilizers in the opposite basis are $50/50$ random.
These random stabilizers cannot be reliably assigned in only $O(1)$ rounds. 
Therefore, to maintain fault tolerance against stabilizer measurement errors, we cannot rely on their information.
Analogously, an important assumption we make is that the magic $\ket{\lop{T}}$ states have been  prepared such that they have reliable stabilizers, which can be realized via a variety of approaches such as magic state cultivation~\cite{gidney2024magic} and distillation~\cite{bravyi2005universal}.

Crucially, when we back-propagate a reliable logical Pauli product through the circuit, by definition any logical Pauli initializations it terminates on must be in the same basis, which has $+1$ stabilizers.
Therefore, when decoding, we never use information from these 50/50 random stabilizers. 
Then, as the operator evolves through the transversal Clifford circuit, the stabilizer checks evolve identically, with noisy stabilizer measurements providing protection in the right basis against errors during transversal gates.
At the final transversal measurement of logical $\lop{Z}$ or $\lop{X}$, upon which all data qubits are measured in either the $X$ or $Z$ basis, respectively, the relevant stabilizers are inferred reliably from the data qubit measurements.

As a result, the decoding problem for the logical Pauli product bears key similarities to decoding a single logical qubit initialized in $\lket{0}$ and measured in the $Z$ basis.  
The $X$ stabilizers are not necessary to protect $\lop{Z}$ and, in fact, never need to be measured to accurately predict $\lket{0}$. 
Note that this approach crucially differs from existing matching-based decoders for transversal logical algorithms~\cite{beverland2021cost,sahay2024error,wan2024iterative}, which decode one logical qubit at a time and copy over error commitments.
Such strategies rely on the randomly-initialized stabilizers, which can lead to logical errors when only $O(1)$ SE rounds follow each operation (see Appendix~\ref{appendix:other_strategy}).

\begin{figure*}[t!]
\centering
\includegraphics{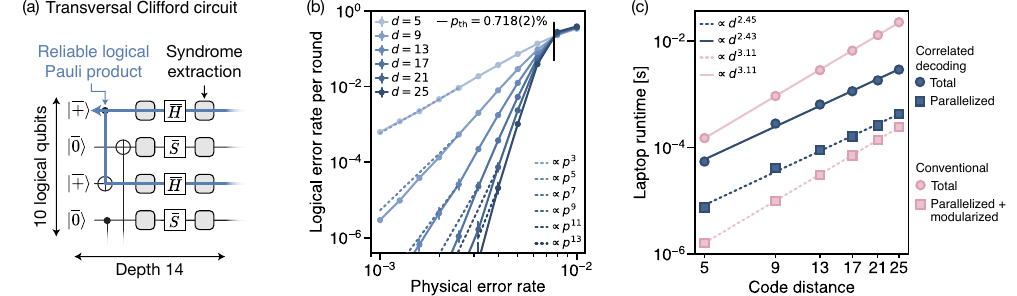}
\caption{Decoding transversal circuits with MWPM.
(a) We consider a depth-14 random logical Clifford circuit on 10 surface codes, consisting of alternating layers of transversal $\lcnot$s and fold-transversal $\lop{H}$ or $\lop{S}$ gates, each followed by a single SE round.
(b)~The logical error rate per gate layer as a function of physical error rate $p$ displays a threshold at $\pth = \pthclifforddeep$, and scales approximately as $(p/p_{\text{th}})^{\lfloor (d+1)/2\rfloor}$.
(c)~The total and parallelized run times for all reliable logical Pauli products at $p=0.1\%$ scale more favorably with code distance than conventional methods using lattice surgery and modular decoding, and the total run time is also reduced.
These run times are obtained using an Apple M2 Max laptop.
}
\label{fig:figure_3}
\end{figure*}

\subsection{Reduced decoding volume and software commitments \label{subsec:volume}}
Decoding only reliable logical Pauli products has the additional benefit of reducing the decoding volume (i.e., the size of the decoding problem). 
Each new measurement only requires decoding the part of the circuit that the single logical Pauli product traces through, as opposed to jointly re-decoding all logical qubits in the algorithm at each step (or their light-cones of depth $d$).
Furthermore, if many qubits are simultaneously measured, their corresponding logical Pauli product(s) can be decoded independently in parallel [see Fig.~\ref{fig:figure_2}(c)].
Interestingly, in this procedure the reliable logical Pauli products can have inconsistent physical error assignments.
However, their decoded values are still guaranteed to be correct (Theorem~\ref{thm:full_proof}), ensuring the fault tolerance of the algorithm as a whole. 

Because the reliable Pauli products are guaranteed to terminate on time boundaries with reliable stabilizers, this further opens up the possibility that the corrections can be committed upon decoding, such that the associated stabilizers do not need to be re-decoded in the future. 
This effectively minimizes the total decoding volume: once each region of space-time is passed through by the decoder, it will not be decoded again.
In Figure~\ref{fig:figure_4}, we apply this strategy to magic state distillation, empirically finding similar performance while ensuring that the factory does not need to be decoded again.

Although the results in Figure~\ref{fig:figure_4} indicate that this commitment strategy works in practice on a core algorithmic subroutine, there are various interesting routes to further analyze. 
In Appendix~\ref{appendix:software_commitments}, we prove that such a procedure is possible and the commitment boundary between reliable Pauli products during a $\lcnot$ experiences only local stochastic noise. 
However, more analysis is required to understand whether the resulting local stochastic noise is always matchable in circuits with \textit{time-like} loops, or undetectable error configurations within a reliable Pauli product that involve only stabilizer measurement errors (see Appendix~\ref{appendix:software_commitments} for additional discussion).
Furthermore, it would be interesting to understand whether the benefits of software commitments in reducing decoding volume can be combined with the parallelism of decoding the operators independently. 
Parallelization opportunities within a reliable Pauli product are also valuable to explore~\cite{fowler2015minimum,wu2023fusion}, as although many practical circuits have a linear structure~\cite{cuccaro2004new, babbush2018encoding,fowler2018low, haah2023quantum, zhou2025resource}, 
in the worst case the volume can grow exponentially~\cite{sahay2024error}.
It would be interesting to investigate how 
these various decoding considerations can be leveraged for circuit design in future algorithm compilations.

\section{Numerical Results}
\label{sec:numerics}
We now numerically simulate our decoding strategy, observing performance and run times comparable to those of a single-qubit memory.
Simulations are performed using the Stim package~\cite{gidney2021stim}, which enables circuit-level error sampling under a noise model with error probability $p$.
We use the PyMatching package for MWPM decoding~\cite{higgott2025sparse}.
Full details of the simulation setup, including the logical gates, noise model, and construction of the decoding hypergraphs, are provided in Appendix~\ref{appendix:numerics}.
The Stim circuits are available at Ref.~\cite{cain2025zenodo}.

In Figure~\ref{fig:figure_3}, we show the results of a benchmark involving a depth-14 transversal  Clifford circuit on 10 surface codes.
In odd gate layers, qubits undergo a random pairing of transversal $\lcnot$ gates; in even layers, half of the qubits randomly undergo $\lop{H}$ gates and the other half undergo fold-transversal $\lop{S}$ gates (see Fig.~\ref{fig:figure_3}(a)).
Each layer is followed by a single round of SE.
At the end of the circuit, we measure the stabilizers and the 10 reliable logical Pauli products using noiseless multi-qubit Pauli product measurements.
We choose the basis of reliable Pauli products which, when back-propagated through the circuit, terminate at a single logical Pauli initialization.
We decode these operators in parallel using independent matchable decoding graphs.
Note that this circuit has time-like loops (Sec.~\ref{subsec:volume}), so the strategies described in Appendix~\ref{appendix:software_commitments} must be applied to use the software commitment strategy.

The resulting logical error rate per gate layer~\footnote{The logical error rate per gate layer is defined in terms of the total logical error rate $P$ and circuit depth $D$ as \mbox{$P_\text{layer} = P_{\text{max}}\big(1 - (1 - P/P_{\text{max}})^{\frac{1}{D}}\big)
$}. 
Here, \mbox{$P_{\text{max}} = 1-2^{-10}$} is the total logical error rate of a fully mixed state of 10 logical qubits.} as a function of the physical error rate is plotted in Fig.~\ref{fig:figure_3}(b). 
From a fit to the data~\cite{wang2003confinement, watson2014logical}, we extract a threshold of \mbox{$\pth = \pthclifforddeep$}, which is similar to that of a single-qubit memory (\mbox{$\pth = \pthmemory$}; see Appendix~\ref{appendix:numerics}). 
Furthermore, the logical error rate per layer scales approximately as \mbox{$(p/p_{\text{th}})^{\lfloor (d+1)/2\rfloor}$}, consistent with achieving an effective code distance close to $d$.
We verify in Appendix~\ref{appendix:numerics} that these findings are robust to effects from the finite circuit depth by studying a deeper circuit sampled from the same distribution. 

Reducing the number of SE rounds by a factor of $O(d)$ not only decreases the resource cost of quantum operations; it also reduces the total amount of syndrome information in the circuit.
This raises the possibility that the total amount of computational resources required is, in fact, less than the conventional schemes based on lattice surgery, which require $O(d)$ SE rounds per gate for fault tolerance.
We analyze this possibility for the same transversal Clifford circuit in Fig.~\ref{fig:figure_3}(c), finding that the required total run time (throughput) is reduced compared to estimates for lattice-surgery-based computation with modular decoding~\cite{bombin2023modular}, and the parallelized run times (latencies) are comparable (see Appendix~\ref{appendix:surgery}).
Furthermore, because the run time is approximately proportional to the decoding volume (see Appendix~\ref{appendix:numerics}), our approach scales more favorably as a function of $d$ relative to the conventional setting.

To demonstrate our approach in a relevant algorithmic context, we evaluate its performance on a magic state distillation factory, shown in Fig.~\ref{fig:figure_4}(a)~\cite{bravyi2005universal,sales2024experimental}.
To efficiently numerically simulate the system, we substitute the distilled $\lket{T}$ states with $\lket{S}$ states.
A single round of SE is inserted between transversal gates, and the $\lket{S}$ states are prepared with an injected error probability $p$, followed by two rounds of noisy SE (see Appendix~\ref{appendix:numerics} for details).
The decoding is carried out in three stages, each following a layer of feed-forward gates.
After decoding each logical Pauli product, we commit the corresponding correction, reducing the decoding volume for subsequent stages.
As a result, no stabilizers in the factory need to be re-decoded when interpreting the final output qubit later in the computation. 
Figure~\ref{fig:figure_4}(b) shows that the commitment strategy achieves a threshold of $\pth = \pthdistillation$. 
As shown in Fig.~\ref{fig:figure_4}(c), the total decoding run time is substantially reduced compared to estimates for lattice-surgery-based computation with modular decoding (Appendix~\ref{appendix:surgery}).
As expected, the run time for the output qubit (stage 3) is over an order of magnitude smaller than the earlier stages, as its decoding volume is reduced from prior correction commitments.
For $d=25$ the entire factory is decoded in less than 100\,$\mu$s on an Apple M2 Max laptop. 

These numerical benchmarks show that in practice, correlated decoding can reduce to memory-like decoding. 
Therefore, the significant advances already made in tailored classical hardware for memories can readily be leveraged to decode transversal algorithms~\cite{liyanage2023scalable,barber2025real}.

\begin{figure}[t!]
    \centering
    \includegraphics{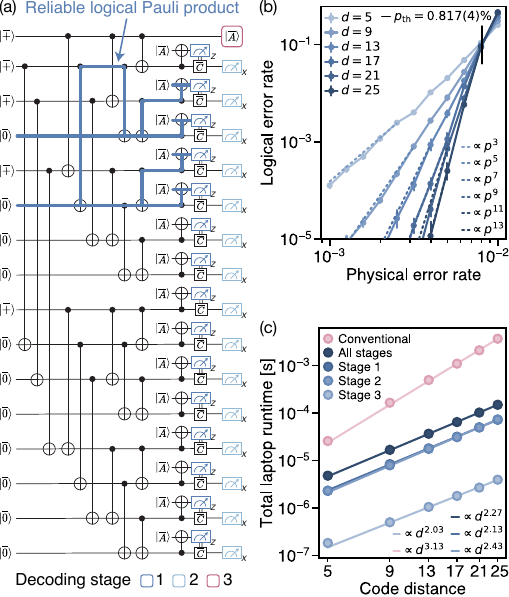}
    \caption{Magic state distillation.
    (a) The logical magic state distillation circuit converts 15 noisy $\lket{A}=\lket{T}$ states to a higher quality $\lket{T}$ state.
    To simulate the circuit efficiently, we replace the $\lket{T}$ states with $\lket{S}$ states and the $\lop{C}=\lop{S}$ feedforward gates with $\lop{Z}$ gates. 
    The factory is decoded in three stages, each following a layer of feed-forward gates.
    Upon decoding each reliable logical Pauli product (e.g., in blue), the corrections are committed, reducing the decoding volume of subsequent rounds. 
    (b)~We plot the measured fidelity of the distilled logical qubit, which is impacted by decoding all three stages correctly. 
    MWPM has a threshold of $\pth = \pthdistillation$ and scales approximately as \mbox{$(p/p_{\text{th}})^{\lfloor (d+1)/2\rfloor}$}.
    (c)~The total time to decode each stage on an Apple M2 Max laptop at a physical error rate of $p=0.1\%$ is substantially smaller than estimates for distillation with lattice surgery. 
    }
    \label{fig:figure_4}
\end{figure}

\section{Proof of fault tolerance\label{sec:proof}}

Paralleling the discussion above, we now prove that our approach is fault-tolerant and exponentially suppresses the logical error rate upon increasing code distance.
Our main result, Theorem~\ref{thm:full_proof}, is an alternative proof of transversal algorithmic fault tolerance for the surface code~\cite{zhou2024algorithmic} which now no longer relies on an inefficient most-likely-error decoder.

We consider the setting of universal computation with logical qubits encoded in unrotated surface codes. 
Clifford operations are implemented (fold-)transversally, each followed by a single SE round.
Concretely, Pauli and $\lcnot$ gates are implemented transversally, $\lop{H}$ is implemented with physical $H$ and a reflection about the diagonal, and $\lop{S}$ is implemented with $S$/$S^\dagger$ gates on the diagonal and $CZ$ gates between pairs of qubits reflected across the diagonal~\cite{kubica2015unfolding,moussa2016transversal,guernut2024fault,chen2024transversal} (see Appendix~\ref{appendix:decoding_hypergraph}, Fig.~\ref{fig:figure_7}). 
We assume the $\lket{T}$ magic states are fault-tolerantly initialized with reliable stabilizers, for example, as the output of magic state cultivation~\cite{gidney2024magic} or distillation~\cite{bravyi2005universal}.
Finally, we use a local stochastic noise model, in which the probability of $k$ elementary errors occurring decays as $p^k$, where $p$ is the probability of an individual error. 
The set of elementary errors are chosen to be single-qubit Pauli $X$ and $Z$ errors before each SE round and on the input $\lket{T}$ states, as well as flips of syndrome measurement results, similar to a phenomenological noise model~\cite{dennis2002topological, fowler2012surface}.

We first state our main result in this setting, then detail the proof below. 
\begin{theorem}[Exponential error suppression for universal quantum computation]
\label{thm:full_proof}
Consider a logical circuit implemented with surface codes of distance $d$ that comprises transversal Clifford gates and reliable magic state inputs (with $+1$ stabilizer measurement outcomes, up to local stochastic noise).
Then, there exists a threshold $p_0 > 0$, such that for local stochastic noise with constant physical error rate $p < p_0$, there is a decoding strategy based on MWPM with a logical error rate scaling with $d$ as $O(n_{\mathrm{loc}} m(p/p_0)^{d/4})$, where $n_{\mathrm{loc}}$ is the number of physical error locations and $m$ is the number of logical measurements.
\end{theorem}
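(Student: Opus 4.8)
The plan is to reduce the fault tolerance of the entire transversal algorithm to a statement about each individual reliable logical Pauli product, and then show that decoding each such product with MWPM is formally equivalent to decoding a single-qubit surface-code memory, for which the distance-$d$ matching guarantee is standard. The overall logical error rate will then follow from a union bound over the (at most $m$) reliable products that must be decoded, with each contributing a failure probability that decays as $(p/p_0)^{d/4}$.

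\textbf{Step 1: Reduce correctness of the sampled bit string to correctness of each reliable product.}
First I would invoke the sampling structure already described in Sec.~\ref{subsec:decoding_strategy}: the logical output is reconstructed from a basis of measurement products, each of which is either independently $50/50$ random (contributing no decoding error) or a reliable logical Pauli product. Since the $50/50$ assignments are correct \emph{by definition} (they carry no quantum information), the only way the sampled distribution deviates from ideal is if some reliable Pauli product is decoded incorrectly. Thus it suffices to bound, for a fixed reliable product, the probability that its decoded value differs from the true eigenvalue, and then union-bound over the basis of reliable products, which has size at most $m$.

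\textbf{Step 2: Show each reliable product's decoding graph is a single-qubit memory graph.}
The core technical step is to establish that the matchable decoding graph constructed in Sec.~\ref{subsec:matchable} for a single reliable product is, up to relabeling, the detector graph of a single logical qubit initialized and measured in the relevant Pauli basis. Here I would lean on two facts from the excerpt: (i)~by Lemma~\ref{lemma:complete_subgraph}, every error that can corrupt the product at its final measurement flips at least one of the retained checks, so the restricted graph is a \emph{complete} error model for this product (no error goes undetected by a path that could flip the logical); and (ii)~by the argument in Sec.~\ref{subsec:matchable}, stabilizer-measurement errors on the back-propagation path flip exactly two checks (simple time-like edges) while data-qubit errors reproduce the space-like edge structure of a bare surface-code patch. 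Because transversal Clifford gates transform the tracked logical operator and the relevant stabilizers identically, the checks stitch together across time steps exactly as the time boundaries of a memory experiment do, with the reliable $\lket{T}$/Pauli initialization providing a clean time boundary (reliable $+1$ stabilizers) and the transversal measurement providing the other. The upshot: the restricted graph has the distance-$d$ property of the underlying code, so MWPM on it has a minimum-weight logical fault of weight $\geq \lceil d/2 \rceil$.

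\textbf{Step 3: Local-stochastic failure bound and the union bound.}
Given a distance-$d$ matching graph under a local stochastic noise model, the standard Dennis--Kitaev--Landahl--Preskill-style argument bounds the probability that MWPM commits a logical fault: a logical failure requires a cluster of elementary errors of total weight at least $\sim d/2$, and summing over self-avoiding error configurations of that weight gives a bound of the form $O(n_{\mathrm{loc}}(p/p_0)^{d/4})$ below a threshold $p_0 > 0$ (the $d/4$ exponent, rather than $d/2$, conservatively absorbs the fact that the effective matching distance along a back-propagation path that traverses multiple gate layers may be reduced by a constant factor relative to a pristine memory, as well as the decomposition of correlated hyperedges into simple edges). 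Multiplying by the $m$ reliable products from the union bound in Step~1 yields the claimed $O(n_{\mathrm{loc}} m (p/p_0)^{d/4})$ scaling.

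\textbf{The main obstacle} I anticipate is Step~2 --- specifically, verifying that the \emph{back-propagated} graph genuinely retains the full code distance even when the product's path threads through many transversal gates, rather than having its effective distance degraded by the way hyperedges were pruned to simple edges. One must check that no low-weight combination of the retained simple edges can create an undetected logical fault that would have been caught by the discarded checks; this is exactly where the completeness guarantee of Lemma~\ref{lemma:complete_subgraph} does the heavy lifting, and where the subtlety flagged in Sec.~\ref{subsec:volume} about \emph{time-like loops} and undetectable stabilizer-only error configurations could in principle erode the distance. For the circuit class in the theorem (single SE round per gate, reliable inputs, no committed corrections), I would argue these pathological configurations cannot reduce the distance below $O(d)$, but making that argument airtight --- rather than the clean per-product, per-basis bound --- is the crux of the proof.
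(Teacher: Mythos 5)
Your proposal follows essentially the same route as the paper: reduce to a basis of measurement products that are either independently $50/50$ random or reliable (Lemma~\ref{lemma:complete_basis}), establish that each reliable product's restricted decoding subgraph has reliable initial stabilizers, is complete, and is matchable (Lemmas~\ref{lemma:reliable_init}--\ref{lemma:matchable}), apply a Gottesman/Aliferis-style connected-cluster counting bound to get $O(n^{\mathrm{sub}}_{\mathrm{loc}}(p/p_0)^{d/4})$ per product, and union-bound over the $m$ measurements (Lemma~\ref{lemma:union}). The one step you flag as the crux --- that the pruned subgraph retains distance $O(d)$ --- is exactly where the paper does the work, by back-propagating a candidate fault cluster $f$ together with the logical $\lop{P}$ to the initialization boundary, where Lemma~\ref{lemma:reliable_init} pins the same-basis stabilizers so that a logical fault requires $|\tilde f|\geq d$, and the bounded spreading of errors under (fold-)transversal gates gives $|f|\geq d/2$ and hence the $p^{d/4}$ exponent you correctly anticipated.
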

\noindent 
We note that the factor of two reduction in the distance is due to a loose bound on error propagation of the fold-transversal $\lop{H}$ and $\lop{S}$ gates, and can likely be improved, as supported by the numerical evidence in Section~\ref{sec:numerics}.

We first show that each new logical measurement in the computation can be decoded reliably. 
To do so, we identify a basis of reliable Pauli products.
Suppose that $\nmeas$ measurements have occurred so far in a computation on $n$ logical qubits.
We associate a product of measurements with a  vector $\vec{v}\in \mathbb{Z}_2^{\nmeas}$ whose $i$th entry is one if and only if the product includes the $i$th measurement.
We then construct a matrix $M = \left[\frac{M^x}{M^z}\right]\in \mathbb{Z}_2^{2n}\times \mathbb{Z}_2^{\nmeas}$ describing how the measured operator back-propagates through the circuit: $M^x_{ij}$ ($M^z_{ij}$) is equal to one if and only if the back-propagation of measurement $j$ has support on the initialization of logical qubit $i$ in the $\lop{X}$ ($\lop{Z}$) basis.
Finally, we form a set $B$ characterizing which initialization bases are reliable.
We associate the $\lop{Z}$ (respectively, $\lop{X}$) initialization basis of logical qubit $i$ with a vector $\vec{e}_{z,i}\in\mathbb{Z}_2^{2n}$ ($\vec{e}_{x,i}$), which has only the $i$th ($(n+i)$th) entry equal to one. 
If the $i$th qubit is initialized in $\lket{T}$, $B$ includes both $\vec{e}_{z,i}$ and $\vec{e}_{x,i}$.
If the $i$th qubit is initialized in a Pauli state, $B$ only includes the initialization basis (e.g., $\vec{e}_{z,i} \in B$ if the $i$th qubit is initialized in $\lket{0}$).
\begin{definition}[Reliable logical Pauli product]
\label{def:reliable_product}
We call a logical Pauli product \textit{reliable} if its corresponding vector $\vec{v}$ satisfies 
\begin{equation}
    M\vec{v} \in \text{span}\,B. \label{eq:nonrandom_operators}
\end{equation}
\end{definition}
\noindent Note that the vectors from reliable logical Pauli products form a linear space, meaning any product of reliable logical Pauli products is also reliable. 
Therefore, as proven in Appendix~\ref{appendix:proof}, we can form a basis of measurement products, each of which either needs to be decoded or is 50/50 random and independent of other products. 
\begin{lemma}[Complete basis of measurements]
\label{lemma:complete_basis}
For a set of $\nmeas$ logical measurements, there exists a full-rank matrix $V\in \mathbb{Z}_2^{\nmeas\times \nmeas}$, where each column $\vec{v}_i$ of $V$ corresponds to a logical measurement product, such that either $\vec{v}_i$ is a reliable logical product, or the result of $\vec{v}_i$ is independent from other columns and always 50/50 random.
\end{lemma}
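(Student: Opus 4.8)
The plan is to split the measurement-product space $\mathbb{Z}_2^{\nmeas}$ into a \emph{reliable} subspace together with a complement of ``fresh'' random bits, and then to prove that the complementary outcomes are jointly independent uniform coins, independent of the reliable block. First I would fix the decomposition $\mathbb{Z}_2^{2n}=\operatorname{span}B\oplus R_{\mathrm{rand}}$, where I take the complement
\[
R_{\mathrm{rand}}=\operatorname{span}\big(\{\vec e_{x,i}:i\in\mathcal I_0\}\cup\{\vec e_{z,i}:i\in\mathcal I_+\}\big)
\]
to be the anti-commuting partners of the Pauli-state stabilizers, with $\mathcal I_0$ (resp.\ $\mathcal I_+$) indexing qubits initialized in $\lket0$ (resp.\ $\lket+$). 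With $\pi$ the projection onto $R_{\mathrm{rand}}$, reliability of $\vec v$ is exactly $\pi(M\vec v)=0$, so $R=\{\vec v:M\vec v\in\operatorname{span}B\}=\ker(\pi\circ M)$ is a subspace, as already noted in the excerpt. I would choose a basis $\vec v_1,\dots,\vec v_k$ of $R$ for the reliable columns and extend it to a basis $\vec v_1,\dots,\vec v_{\nmeas}$ of $\mathbb{Z}_2^{\nmeas}$; this is the full-rank matrix $V$.

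Since $R=\ker(\pi M)$, the map $\pi M$ is injective on the span of the added columns, so the random components $r_j:=\pi M\vec v_j$ for $j>k$ are linearly independent in $R_{\mathrm{rand}}$. Writing $P_j$ for the logical Pauli operator that column $\vec v_j$ back-propagates to at initialization, each $P_j$ with $j>k$ anti-commutes with some product $g$ of Pauli-state stabilizers; because $g$ stabilizes the initial product state $\rho_0$, the marginal outcome is $50/50$, via $\langle P_j\rangle=\operatorname{Tr}(P_j g\rho_0)=-\operatorname{Tr}(gP_j\rho_0)=-\langle P_j\rangle=0$.

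The step I expect to be the main obstacle is upgrading these marginal facts to \emph{joint} independence — both among the added columns and relative to the reliable block — in the presence of magic-state inputs (which make the global process non-stabilizer) and of mid-circuit collapse. My plan is to exploit the symplectic duality between $R_{\mathrm{rand}}$ and the Pauli-state stabilizer space $S_{\mathrm{stab}}\subseteq\operatorname{span}B$, which is a perfect pairing because each $\vec e_{x,i}$ (resp.\ $\vec e_{z,i}$) pairs only with its partner $\vec e_{z,i}$ (resp.\ $\vec e_{x,i}$). Using the linear independence of $\{r_j\}_{j>k}$, I would select a dual family $g_{k+1},\dots,g_{\nmeas}\in S_{\mathrm{stab}}$ with symplectic products $\langle g_j,r_{j'}\rangle=\delta_{jj'}$; since every $g_j$ pairs trivially with all of $\operatorname{span}B$, it commutes with every reliable operator and with every $P_{j'}$ for $j'\neq j$, anti-commutes only with $P_j$, and still stabilizes $\rho_0$.

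Finally I would run the standard flip argument in sequential form: propagating each measured operator back to initialization while updating the Pauli frame by the earlier outcomes (valid because, upon execution, the circuit is transversal Clifford), the stabilizer $g_j$ fixes $\rho_0$ and anti-commutes only with $P_j$, so conditioning on all other measured columns leaves the $j$th outcome uniformly $\pm1$. This yields mutual independence of the added columns and their independence from the reliable columns, completing the construction of $V$. The residual care — which I regard as the crux of the write-up rather than a conceptual gap — is to verify that mid-circuit collapse only updates signs and never alters this linear randomness structure, and that the magic-state expectation values enter solely through the reliable block and never through the complementary coins.
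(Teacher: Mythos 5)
Your proof is correct, and it reaches the lemma by a genuinely different construction than the paper's. The paper builds $V$ \emph{inductively}, one measurement at a time: when $\lop P_{\nmeas+1}$ arrives, it is either multiplied by a subset of earlier measurements to form a reliable product or added on its own, yielding an upper-triangular $V$ with unit diagonal; full rank is then immediate, and the existence of a stabilizer flipping exactly one unreliable column is proved by contradiction (a linear dependence among the anti-commuting stabilizers would produce a reliable product containing the newest measurement, contradicting the construction). You instead work globally: identify the reliable subspace as $\ker(\pi M)$, extend a basis of it arbitrarily, and use the nondegenerate symplectic pairing between $R_{\mathrm{rand}}$ and the Pauli-state stabilizer space to exhibit an explicit dual family $\{g_j\}$, each anti-commuting with exactly one unreliable column and commuting with everything else. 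Your dual-basis argument is cleaner and more transparent than the paper's proof by contradiction, and you are more explicit than the paper about the final probabilistic step (the paper simply asserts that applying a stabilizer of $\rho_0$ ``should not change the measurement distribution''); your flagged concerns about mid-circuit collapse and magic-state inputs apply equally to the paper's version. What the paper's route buys, and yours does not automatically provide, is the triangular structure of $V$: each new column involves only the current measurement and \emph{earlier} ones, which is what lets the decoding algorithm of Sec.~\ref{subsec:decoding_strategy} run online, assigning each measurement as it occurs. If you want your construction to support that algorithm, you should choose the basis extension greedily in measurement order (equivalently, column-reduce to echelon form), which recovers the paper's triangular $V$.
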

\noindent After interpreting the logical measurements in this basis, we can apply $V^{-1}$ to obtain the logical measurement results of each individual logical qubit.

We now show that reliable logical Pauli products are indeed ``reliable": they can be inferred with logical error rate exponentially suppressed with the code distance $d$.
To do so, we identify three key properties of the \textit{decoding subgraph} for each reliable logical Pauli product (Lemmas~\ref{lemma:reliable_init},~\ref{lemma:complete_subgraph}, and~\ref{lemma:matchable}).
This subgraph is constructed by back-propagating the measured Pauli product through the Clifford circuit, and including only checks involving stabilizer measurement results for the same logical qubits and basis as the instantaneous logical operator (see Appendix~\ref{appendix:decoding_hypergraph}).
Only the physical error sources which can flip these checks are included in the decoding subgraph.

\begin{lemma}[Reliable stabilizer initialization]
\label{lemma:reliable_init}
All initial stabilizers in the decoding subgraph of a reliable logical Pauli product are $+1$.
\end{lemma}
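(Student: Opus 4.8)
The plan is to show that at every time boundary where the back-propagated operator terminates, the stabilizers retained in the decoding subgraph are exactly those whose value at initialization is pinned to $+1$. Since $\text{span}\,B$ is the direct sum over qubits of the allowed single-qubit bases, the reliability condition $M\vec{v}\in\text{span}\,B$ from Definition~\ref{def:reliable_product} decomposes qubit by qubit: for each logical qubit $i$, the $(\lop{X},\lop{Z})$ support of the back-propagated operator at the initialization of qubit $i$ must lie in the part of $B$ associated with that qubit. First I would make this per-qubit decomposition explicit, so that the global symplectic condition reduces to a statement about the basis of the instantaneous operator at each individual initialization.

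Next I would treat the Pauli initializations. For a qubit prepared in $\lket{0}$, only $\vec{e}_{z,i}\in B$, so the reliability condition forces the $\lop{X}$-component $(M^x\vec{v})_i$ to vanish; hence the back-propagated operator restricted to qubit $i$ is a product of $\lop{Z}$-type operators (logical and/or stabilizer), with no $\lop{X}$ or $\lop{Y}$ content. By construction, the decoding subgraph retains only checks in the same basis as this instantaneous operator, i.e.,~$Z$ stabilizers. Because preparing $\lket{0}$ initializes every physical qubit in $\ket{0}$, each $Z$ stabilizer---being a product of physical $Z$ operators---has deterministic outcome $+1$ on $\ket{0}^{\otimes n}$, so all initial checks retained here carry a valid $+1$ reference. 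The argument for $\lket{+}$ is identical with $\lop{X}$ and $\lop{Z}$ exchanged.

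The magic-state case is handled by assumption rather than by the reliability condition: for a qubit prepared in $\lket{T}$ both $\vec{e}_{z,i}$ and $\vec{e}_{x,i}$ lie in $B$, so the instantaneous operator may be of either type, but the hypothesis of Theorem~\ref{thm:full_proof} guarantees that $\lket{T}$ is initialized with reliable ($+1$) stabilizer outcomes in both bases. Thus, whichever basis the decoding subgraph selects at a magic-state boundary, the initial stabilizers are again $+1$. Combining the Pauli and magic-state cases over all terminating initializations then establishes the claim.

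The main obstacle I anticipate is bookkeeping rather than conceptual: I must carefully identify which stabilizer checks the decoding subgraph actually includes at an initialization boundary and verify that ``same basis as the instantaneous logical operator'' coincides with ``initialization basis'' in the Pauli case. In particular, I would confirm that multiplying the instantaneous operator by stabilizer generators does not change its CSS type on the relevant qubit, so that the selected stabilizers remain in the $+1$ basis. Once this alignment is pinned down, the deterministic $+1$ values follow directly from the physical preparation, and no information from the orthogonal, $50/50$ random stabilizers is ever used at the time boundary.
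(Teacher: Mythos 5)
Your proposal is correct and follows essentially the same route as the paper's proof: the reliability condition of Definition~\ref{def:reliable_product} forces the back-propagated operator to terminate on Pauli initializations only in the matching basis (whose stabilizers are deterministically $+1$ from the physical product-state preparation) or on $\lket{T}$ states (whose stabilizers are $+1$ by assumption). The paper states this in one sentence; your per-qubit decomposition of $M\vec{v}\in\text{span}\,B$ is just a more explicit unpacking of the same argument.
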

\begin{proof}
This follows from Definition~\ref{def:reliable_product}, as the back-propagated operator either terminates on $\lket{T}$ states (which have reliable $+1$ stabilizers by assumption) or logical Pauli products in the same basis (which have $+1$ stabilizers) (see also Sec.~\ref{subsec:fault_tolerance}).
\end{proof}

\begin{lemma}[Completeness of decoding subgraph]
\label{lemma:complete_subgraph}
Any elementary physical error that can affect the reliable logical Pauli product will be detected in the resulting decoding subgraph.
\end{lemma}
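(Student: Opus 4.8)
The plan is to prove the contrapositive: I would show that any single elementary error which flips no check in the decoding subgraph cannot change the value of the reliable logical Pauli product at its final measurement. The argument proceeds in two stages—first translating ``affects the product'' into a local commutation condition via Clifford propagation, and then invoking a weight-versus-distance argument at the time-slice where the error occurs.

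First I would characterize when an elementary error affects the product. Writing $L_t$ for the back-propagated instantaneous logical operator at the time-slice $t$ of the error, I would use the fact that conjugation by the transversal Clifford circuit preserves commutation relations: propagating the error forward to the final transversal measurement, it flips the reliable product's observable if and only if, at its own slice, the error anti-commutes with $L_t$. This reduces the global question of ``affecting the product'' to a purely local, single-slice commutation check against $L_t$.

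Next I would show that anti-commutation with $L_t$ forces a detector flip. The key input, established in Sec.~\ref{subsec:matchable}, is that transversal Clifford gates transform the logical operator and the stabilizer group in exactly the same way, so at every slice $L_t$ is a legitimate logical representative and the subgraph's checks are precisely the stabilizers in the same basis as $L_t$ on the logical qubits it supports. An elementary Pauli error has weight one before propagation, so at its slice the picture reduces to a single copy of the surface code on the one affected logical qubit $a$; there the error anti-commutes with the restriction $L_t|_a$, which is a $\lop{Z}$- or $\lop{X}$-type logical representative (times stabilizers). If the error commuted with every same-basis stabilizer of code $a$, it would have to be an undetectable logical operator of weight at least $d$, impossible for $d>1$; hence it anti-commutes with some subgraph stabilizer and flips one of its detectors. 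For errors on the $\lket{T}$ inputs the same argument applies at the initialization slice, using Lemma~\ref{lemma:reliable_init} to guarantee the relevant stabilizers are present and reliable; syndrome-measurement flips either coincide with a subgraph stabilizer (detected directly) or lie outside the subgraph, in which case they never enter the observable and so cannot affect the product.

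The main obstacle I anticipate is the fold-transversal $\lop{H}$ and $\lop{S}$ gates. Under $\lop{H}$ the instantaneous operator swaps basis ($\lop{X}\leftrightarrow\lop{Z}$), and under $\lop{S}$ it can acquire mixed $Y$-type support, so the subgraph's basis-tracking must switch checks accordingly and include both $X$- and $Z$-type stabilizers where needed. Moreover, these gates are implemented with reflections and $CZ$ gates that correlate physical qubits across the fold, so a single elementary error can propagate into a correlated multi-qubit pattern; I would need to bound this propagation carefully to confirm that the offending error still flips a subgraph detector at some slice. This is precisely the step responsible for the loose factor-of-two reduction in the effective distance noted after Theorem~\ref{thm:full_proof}, and handling it rigorously—rather than through the clean weight argument that suffices for $\lcnot$ and Pauli gates—is where the real work lies.
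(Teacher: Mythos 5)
Your proposal is correct and follows essentially the same route as the paper's proof: conjugation by the Clifford circuit preserves commutation, so an elementary error affects the product if and only if it anti-commutes with the back-propagated instantaneous operator at its own time slice, and a weight-one error anti-commuting with a same-basis logical representative must flip a same-basis stabilizer (else it would be a weight-one logical operator). The paper states this last step more tersely; your explicit distance argument and your handling of syndrome-measurement flips are fine elaborations. One clarification: the obstacle you anticipate for the fold-transversal $\lop{H}$ and $\lop{S}$ gates does not arise in this lemma, because elementary Pauli errors are by convention inserted immediately before each SE round, so no gate propagation intervenes between the error and the detecting stabilizer measurements; the factor-of-two loss from error spreading enters only in the cluster-weight bound of Theorem~\ref{thm:single_product}, not in completeness of the subgraph.
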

\begin{proof}
Consider any elementary Pauli error $e$, and denote its forward-propagation through the circuit as the Pauli operator $e'=U^\dagger eU$.
In order for the elementary error to affect the logical Pauli product $\lop{P}$, $e'$ and $\lop{P}$ must anti-commute $\{e',\lop{P}\}=0$, which in turn means that the error must anti-commute with the back-propagation of the logical Pauli product, $\{e,U\lop{P}U^\dagger\}=0$.
Since the stabilizer measurements are in the same basis as the logical Pauli product, the error $e$ must flip subsequent stabilizer measurement results and therefore be detected by the decoding subgraph.
\end{proof}

\begin{lemma}[Matchable decoding subgraph]
The decoding subgraph for any reliable logical Pauli product has edge degree at most two, and bounded vertex degree $O(1)$. \label{lemma:matchable}
\end{lemma}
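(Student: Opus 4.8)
The plan is to prove the two assertions—edge degree at most two and vertex degree $O(1)$—separately, and to reduce both to the well-understood structure of a single-qubit surface-code memory, exploiting the fact emphasized in Sec.~\ref{subsec:matchable} that transversal Clifford gates conjugate the stabilizer group and the logical operator in the same way. First I would fix notation for the decoding subgraph: starting from the final transversal measurement of the reliable Pauli product, back-propagate the operator and, at each time step $t$, record the set of stabilizers (in the basis of the instantaneous logical operator) appearing in the back-propagation. Checks are then formed as in Sec.~\ref{subsec:matchable}, by multiplying each recorded stabilizer measurement at time $t$ with the back-propagation of its successor stabilizer through the intervening gate. The goal is to show that along this ``thickened worldline'' every edge has degree at most two and every vertex has degree $O(1)$.

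For the time-like edges (stabilizer measurement errors) I would show that, within the subgraph, each recorded stabilizer measurement participates in exactly two checks: one incoming check relating it to the previous SE round and one outgoing check relating it to the next, through the adjacent transversal gates. The key point is that a single logical worldline is one-dimensional in time, so along each track a stabilizer has a unique predecessor and successor, and a measurement error flips exactly these two checks. I would treat the transversal $\lcnot$ explicitly to verify that merge/split points do not spoil this: for the target-type track the combined check $Z_1^{t}Z_2^{t}Z_2^{t+1}$ of Fig.~\ref{fig:figure_2}(b) simultaneously serves as the outgoing check of $Z_1^{t}$ and the incoming check of $Z_2^{t+1}$, so each of the three involved measurements still touches exactly two subgraph checks. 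This is precisely where the ``same basis and same worldline'' restriction does its work: in the full circuit graph $Z_1^{t}$ would additionally appear in $Z_1^{t}Z_1^{t+1}$, yielding the irreducible weight-three hyperedge, but that check belongs to a different logical track and is absent from this subgraph. I would then invoke the explicit gate-by-gate check tables in Appendix~\ref{appendix:decoding_hypergraph} to confirm the same behavior for the fold-transversal $\lop{H}$ and $\lop{S}$ gates, where the reflection and $CZ$ structure again map each tracked stabilizer to a bounded product of same-basis stabilizers at the neighboring time step.

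For the space-like edges (data-qubit Pauli errors) I would argue that, because each gate is transversal (acting qubit-wise, or on reflected pairs for $\lop{H}$ and $\lop{S}$), a single-qubit Pauli error in the relevant basis is mapped to single-qubit Pauli errors on the corresponding physical qubits of the co-moving patches. On any one unrotated surface-code patch such an error anticommutes with at most two same-basis stabilizers, exactly as in a single-qubit memory, so it flips at most two checks; by Lemma~\ref{lemma:complete_subgraph} it flips at least the checks needed to detect it. For the bounded vertex degree, I would note that each check is a product of $O(1)$ stabilizer measurements—the back-propagation of a single surface-code stabilizer (of weight at most four) through one transversal gate is a product of a bounded number of same-basis stabilizers—and each such stabilizer is flipped by only $O(1)$ elementary errors (its own measurement error and the $O(1)$ adjacent data-qubit errors). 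Geometric locality of the surface code then bounds by a constant the number of error edges incident to any check, giving the required bounded degree and, together with the edge-degree bound, exactly the input conditions for MWPM used in Theorem~\ref{thm:full_proof}.

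The hard part will be the bookkeeping at the merge and split points of the worldline induced by the $\lcnot$ (and the analogous reflection and entangling structure of the fold-transversal gates): one must verify in every configuration—operator continuing on one patch, merging onto another, or splitting across both—that no tracked stabilizer ends up in a third subgraph check, and that boundary stabilizers introduced at only a single time slice (as at a split) still carry exactly two incident checks, or a single boundary edge at initialization and measurement. Establishing this cleanly amounts to showing that the subgraph is locally graph-isomorphic to the detector graph of a single idling surface-code qubit under the Clifford frame map, which is the crux of reducing the transversal decoding problem to matching.
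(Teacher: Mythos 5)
Your proposal is correct and follows essentially the same route as the paper's proof: space-like edges are bounded by the fact that a data-qubit error anticommutes with at most two same-basis stabilizers on a single patch, time-like edges by the observation that each tracked stabilizer measurement enters exactly two subgraph checks (verified gate-by-gate via the check tables of Appendix~\ref{appendix:decoding_hypergraph}), and vertex degree by locality of the checks. Your explicit treatment of the $\lcnot$ merge point, where $Z_1^{t}Z_2^{t}Z_2^{t+1}$ serves as both the outgoing check of $Z_1^{t}$ and the incoming check of $Z_2^{t+1}$, is a more detailed rendering of the same argument the paper gives in Sec.~\ref{subsec:matchable}.
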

\begin{proof}
For the surface code, data qubit $X$ or $Z$ errors trigger one or two syndromes at their endpoints, thereby affecting one or two checks.
As described in Sec.~\ref{subsec:matchable} and Appendix~\ref{appendix:decoding_hypergraph}, each syndrome measurement is involved in at most two checks.
Therefore, time-like edges also flip at most two checks, so the resulting edge degree is at most two.
Because all checks are formed from local products of measurement results, the number of elementary errors that can flip them will be bounded, leading to the bounded vertex degree.
\end{proof}
\noindent In Lemma~\ref{lemma:fewer_se} in Appendix~\ref{appendix:proof}, we show that similar conclusions also hold with less frequent SE, as long as the removed SE rounds do not form large connected clusters in the decoding subgraph.

These Lemmas are sufficient to establish that each reliable logical Pauli product can be predicted with exponentially low error probability. 
Because decoding subgraph is composed of simple edges of degree at most two, MWPM can be applied to efficiently identify the most likely error within the decoding subgraph.
This, along with the sparsity of the decoding subgraph, allows us to use standard cluster counting arguments~\cite{kovalev2013fault,gottesman2013fault} to bound the logical error rate of each reliable logical Pauli product (see Appendix~\ref{appendix:proof} for the full proof).

\begin{theorem}[Exponential error suppression for a single reliable Pauli product]
\label{thm:single_product}
Consider a logical circuit implemented with surface codes of distance $d$ that comprises transversal Clifford gates and reliable magic state inputs (with $+1$ stabilizer measurement outcomes, up to local stochastic noise).
Then, there exists a threshold $p_0 > 0$, such that for local stochastic noise with constant physical error rate $p < p_0$, any reliable logical Pauli product can be decoded with a logical error rate scaling with $d$ as $O(n^{\mathrm{sub}}_{\mathrm{loc}} (p/p_0)^{d/4})$ by applying MWPM to the corresponding subgraph, where $n^{\mathrm{sub}}_{\mathrm{loc}}$ is the number of physical error locations in the subgraph.
\end{theorem}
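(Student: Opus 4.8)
The plan is to reduce the analysis of a single reliable logical Pauli product to a cluster-counting argument of the kind used for a standard surface-code memory, exploiting the three structural properties already established for its decoding subgraph. By Lemma~\ref{lemma:matchable} the subgraph consists only of simple edges (degree at most two) with bounded vertex degree $O(1)$, so MWPM is a well-defined decoder that returns a minimum-weight error consistent with the observed syndrome on this subgraph. By Lemma~\ref{lemma:complete_subgraph}, every elementary error that can corrupt the final value of the product flips at least one check in the subgraph, so restricting attention to the subgraph loses no information relevant to predicting the product. By Lemma~\ref{lemma:reliable_init}, the time boundary at initialization carries deterministic $+1$ stabilizers, so the subgraph has a genuine boundary condition (analogous to a memory initialized in $\lket{0}$ and read out in the $Z$ basis) rather than the random boundary that would spoil fault tolerance.

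The first substantive step is to establish an effective distance $d_{\mathrm{eff}}$ for the subgraph: I would show that any error configuration that is undetectable on the subgraph (flips no checks) yet changes the decoded value of the reliable product must have weight at least $d/2$. Because transversal Clifford gates conjugate the logical operator and its stabilizers identically, the space-like structure of the subgraph at each time slice is isomorphic to that of a single code patch, for which a logical-changing undetectable data error has weight at least $d$. The only mechanism that can reduce this is error propagation through the fold-transversal $\lop{H}$ and $\lop{S}$ gates, whose reflection across the diagonal can merge a pair of qubits and thereby correlate errors; a conservative bound shows this shrinks the minimal undetectable logical error by at most a factor of two, giving $d_{\mathrm{eff}} \geq d/2$. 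I would make this precise by tracking the forward-propagated support $e' = U^\dagger e U$ of a candidate logical fault and verifying that its restriction to any single time slice still spans the code.

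With the matchable subgraph, bounded vertex degree, and $d_{\mathrm{eff}} \geq d/2$ in hand, the final step is a standard cluster-counting (percolation) bound of the type in Refs.~\cite{kovalev2013fault,gottesman2013fault}. Since MWPM returns a correction of weight no larger than the true error, a logical failure requires the true error to have weight at least $d_{\mathrm{eff}}/2 \geq d/4$, organized into a connected cluster of that size in the subgraph. The bounded vertex degree lets me enumerate connected clusters of size $k$ rooted at each of the $n^{\mathrm{sub}}_{\mathrm{loc}}$ error locations by at most $c^k$ for a constant $c$, while local stochasticity bounds the probability of any fixed size-$k$ cluster by $p^k$. Summing the resulting geometric series over $k \geq d/4$ yields a failure probability $O\!\left(n^{\mathrm{sub}}_{\mathrm{loc}} (p/p_0)^{d/4}\right)$ for a suitable threshold $p_0 \sim 1/c$, as claimed.

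I expect the main obstacle to be the effective-distance step, specifically the careful accounting of error propagation through the fold-transversal $\lop{H}$ and $\lop{S}$ gates: one must verify that the diagonal reflection, together with the $CZ$ and $S/S^\dagger$ gates, cannot conspire to produce a low-weight undetectable error chain that nonetheless anticommutes with the final logical operator. The remainder—validity of MWPM on simple-edge graphs and the cluster count—is routine once the geometry and boundary conditions are pinned down by the three Lemmas. As noted after Theorem~\ref{thm:full_proof}, the factor-of-two loss here is almost certainly an artifact of this conservative propagation bound rather than a genuine reduction in distance, consistent with the near-$d$ scaling observed numerically in Section~\ref{sec:numerics}.
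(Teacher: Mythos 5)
Your proposal follows essentially the same route as the paper's proof: it invokes the three structural lemmas, derives a minimum undetectable-fault weight of $d/2$ from a factor-of-two error-propagation bound through the transversal gates (the paper does this by back-propagating the fault cluster to initialization and using Lemma~\ref{lemma:reliable_init} together with the code distance), and then applies the standard cluster-counting argument of Refs.~\cite{aliferis2007accuracy,gottesman2013fault} with a union bound over the $n^{\mathrm{sub}}_{\mathrm{loc}}$ error locations to obtain the $(p/p_0)^{d/4}$ scaling. The only difference is bookkeeping in the final step: the paper enumerates connected clusters of errors \emph{plus} MWPM corrections of size $w\geq d/2$, of which at least $w/2$ must be true errors (with a $2^w$ factor for choosing the error subset), whereas you count the error support directly; both yield the same bound.
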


Because the logical error rate of each reliable logical Pauli product is exponentially small, and the 50/50 random operators are sampled via an unbiased coin flip, we use a union bound to show that the logical error rate of the algorithm is also exponentially small (Appendix~\ref{appendix:proof}, Lemma~\ref{lemma:union}). 
Note that the number of physical error locations in any decoding subgraph $n^{\mathrm{sub}}_{\mathrm{loc}}$ is bounded by the total number of physical error locations $n_{\mathrm{loc}}$.
This yields our final result, Theorem~\ref{thm:full_proof}, establishing that universal computation can be performed with $O(1)$ SE rounds per transversal logical operation using an efficient MWPM decoder.

\section{Conclusion and outlook}
In this work, we developed a procedure for decoding transversal logical algorithms, demonstrating that individual decoding of multi-logical Pauli products can preserve fault tolerance while greatly simplifying this computational task. 
For the surface code, the procedure results in a matchable decoding problem, substantially reducing its run time.
More generally, by tracking how logical operators propagate through a transversal circuit, this technique offers a path to promote a memory decoder to an algorithm decoder, while maintaining memory-like performance with $O(1)$ SE round per transversal logical operation.

These results can be extended in several directions. 
State-of-the-art QEC techniques developed for the memory setting can be applied to algorithms using this framework, including machine learning decoders~\cite{torlai2017neural, liu2019neural, varsamopoulos2020comparing, kuo2022comparison, bausch2023learning, cao2023qecgpt, maan2024machine, wang2023transformer, ninkovic2024decoding}, leakage detection and erasure decoding~\cite{wu2022erasure,kubica2023erasure,yu2024processing,chang2024surface,perrin2024quantum, baranes2025leveraging, gu2025fault}, parallelization methods~\cite{fowler2015minimum, wu2023fusion}, and high-rate quantum low-density parity check codes~\cite{breuckmann2021quantum,bravyi2024high,tillich2014quantum,xu2024constant}.
Furthermore, constructing matchable decoding problems from hypergraph problems can be viewed through the lens of generalized symmetries~\cite{kubica2023efficient, brown2022conservation}, offering additional perspectives and routes toward fast decoding in general quantum low-density parity check codes.
Finally, these techniques can be optimized for core algorithmic subroutines~\cite{cuccaro2004new,bravyi2005universal,babbush2018encoding, gidney2024magic} in several ways, including deciding whether software commitments and modular decoding are used, specifying the basis of reliable Pauli products to decode, and determining accurate methods for handling correlated errors.
Tailoring the decoding strategy to both the algorithmic structure and experimentally realistic error models are essential to realizing the optimal performance of fault-tolerant hardware.
\\

\textit{Note.---} During the preparation of this manuscript, we became aware of a similar and complementary work by the Delft group~\cite{serra-peralta2025decoding}. %which appears in the same arXiv posting. 
\\

\begin{acknowledgments}
\textit{Acknowledgements.---}
We thank B.~Brown, C.~Duckering, J.~Haah, R.~Ismail, M.~Kornjaca, S.~Puri, K.~Sahay, M.~Serra-Peralta, M.~Shaw, B.~Terhal, and S.~Wang for helpful discussions. 
We acknowledge financial support from IARPA and the Army Research Office, under the Entangled Logical Qubits program (Cooperative Agreement Number W911NF-23-2-0219), the DARPA ONISQ program (grant number W911NF2010021), the DARPA MeasQuIT program (grant number HR0011-24-9-0359), the Center for Ultracold Atoms (a NSF Physics Frontiers Center, PHY-1734011), the National Science Foundation (grant numbers PHY-2012023 and  CCF-2313084), the NSF EAGER program (grant number CHE-2037687), the Army Research Office MURI (grant number W911NF-20-1-0082), the Army Research Office (award number W911NF2320219 and grant number W911NF-19-1-0302), the Wellcome Leap Quantum for Bio program, and QuEra Computing.
D.B. acknowledges support from the NSF Graduate Research Fellowship Program (grant DGE1745303) and The Fannie and John Hertz Foundation.
\end{acknowledgments}

\newpage

\appendix

\newpage
\setcounter{page}{11}
\makeatletter
\renewcommand{\theequation}{S\arabic{equation}}
\renewcommand{\thefigure}{S\arabic{figure}}
\renewcommand{\thetable}{S\arabic{table}}
\renewcommand{\refname}{}

\section{Example of the decoding strategy\label{appendix:decoding_example}}

Here we give an explicit example of our decoding procedure. 
We study a circuit for small-angle synthesis~\cite{dawson2005solovay} using alternating $\lop{T}$ and $\lop{H}$ gates, as illustrated in Fig.~\ref{fig:figure_5}. 
The circuit and decoding procedure occur in the following steps.

First, a $\lop{T}$ gate is teleported onto qubit 1 using a magic state on qubit 2 [Fig.~\ref{fig:figure_5}(a)].
The measurement of $\lop{Z}_2$ anti-commutes with the $\lket{+}$ initialization of qubit 1 when back-propagated through the circuit (pink line).
Therefore, it can be assigned a $\pm 1$ value uniformly at random.

The next step of the circuit depend on which measurement assignment was chosen.
If the measurement was assigned as $-1$, a feed-forward $\lop{S}$ gate is applied; otherwise, no feed-forward occurs [Fig.~\ref{fig:figure_5}(b)]. 
Finally, a transversal $\lop{H}$ gate is performed, and another $\lop{T}$ gate is teleported onto qubit 1 using qubit 3. 
The procedure for interpreting the qubit 3 measurement depends on which feed-forward ``branch'' was chosen:
\begin{itemize}
    \item Fig.~\ref{fig:figure_5}(b), top: If the feed-forward $\lop{S}$ was applied, then $\lop{Z}_2 \lop{Z}_3$ is a reliable Pauli product, as its back-propagation through the circuit terminates on the $\lket{+}$ initialization of qubit 1 in the same basis and the qubit 2 and 3 magic states.
    $\lop{Z}_2 \lop{Z}_3$ can therefore be decoded reliably using only the stabilizer measurements along the operator back-propagation~(blue line).
    The $\lop{Z}_3$ measurement is then given by the product of the previous $\lop{Z}_2$ assignment and the decoded value of $\lop{Z}_2 \lop{Z}_3$.
    \item Fig.~\ref{fig:figure_5}(b), bottom: If no feed-forward gate was applied, then $\lop{Z}_3$ itself is a reliable Pauli product. 
    It can be decoded using stabilizer measurements along its back-propagation through the circuit (blue line).
    The $\lop{Z}_3$ measurement is then assigned from its decoded value.
\end{itemize}

In both cases, the reliable logical Pauli products are decoded fault-tolerantly using the procedure described in Section~\ref{subsec:decoding_strategy} of the main text.
Note that which logical Pauli products are reliable depends on which feed-forward branch was taken.
However, in any branch, new measurements are guaranteed to be either independently 50/50 random or part of a reliable Pauli product.
Therefore, they are always interpreted correctly according to the ideal joint logical measurement distribution.

\begin{figure}[t!]
    \centering
    \includegraphics{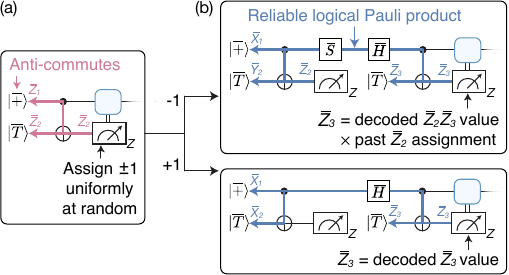}
    \caption{Example of the decoding strategy.
    (a)~The first measurement in this circuit is a $Z$-basis measurement on logical qubit 2.
    Because it anti-commutes with the $\lket{+}$ initialization when back-propagated through the circuit, it can be assigned a $\pm 1$ value uniformly at random.
    (b) If the first measurement is assigned as $-1$, a feed-forward $\lop{S}$ gate is applied (top).
    The measurement of qubit 3 is then assigned using the decoded value of the reliable logical Pauli product $\lop{Z}_2\lop{Z}_3$, and the previous assignment of $\lop{Z}_2$.
    Conversely, if the first measurement is assigned as $+1$, no feed-forward is applied and $\lop{Z}_3$ is a reliable Pauli product (bottom). 
    The measurement of qubit 3 is assigned from its decoded value.
    }
    \label{fig:figure_5}
\end{figure}

\section{Decoding hypergraph construction\label{appendix:decoding_hypergraph}}
Here we describe how to construct matchable decoding subgraphs for the unrotated surface code. 
Similar strategies can be applied to the rotated surface code and other CSS codes, as explored numerically in Fig.~\ref{fig:figure_4} of the main text.
To simplify the discussion, we assume that at each time step $t$, a SE round is performed on all logical qubits, and at most one transversal gate is inserted between rounds. 
We discuss generalizations to multiple gates per SE round in Lemma~\ref{lemma:fewer_se}.

We first define the decoding hypergraph $G=(V, E)$ for the full circuit.
The vertices $V$ are the checks, and the hyperedges $E$ are the error mechanisms, which connect the check(s) they flip.
We set a check for each stabilizer measurement by back-propagating its associated Pauli operator through the preceding transversal gate.
The check is then the product of the stabilizer measurement with the measured value(s) of its backwards-propagated operator at the previous time step.
In the absence of noise, the measurements should match, and all checks are deterministically $+1$.
An error which anti-commutes with an odd number of stabilizers in a check will flip its value to $-1$ and be detected.
Note that the checks for the first stabilizer measurements in the $+1$ initialization basis can form checks on their own, as they are already deterministically $+1$.
In contrast, the first measurement of the non-deterministic stabilizers cannot have an associated check.

Figure~\ref{fig:figure_6} explicitly shows the checks for a transversal $\lcnot$, $\lop{S}$, and $\lop{H}$ gate which, along with the transversal $\lop{X}$ and $\lop{Z}$ gates, generate the full logical Clifford group. 
In these examples, we include two rounds of SE before the transversal gate in order to easily visualize two sets of checks at adjacent time steps (here, comparing stabilizer measurements at times $t-1, t$ and $t, t+1$).
Fig.~\ref{fig:figure_6}(a) shows a transversal $\lcnot$ circuit with logical qubit 1 as the control and 2 as the target, and Fig.~\ref{fig:figure_6}(b) shows the corresponding decoding hypergraph.
Because all of the stabilizers in a given basis evolve identically, without loss of generality we focus on the same $Z$ stabilizer on both logical qubits.
The checks are then given by
\begin{align}
    \mathcal{Z}_1^{t-1, t} &= Z_1^{t-1} Z_1^{t} & \mathcal{Z}_1^{t, t+1} &= Z_1^{t} Z_1^{t+1} \nonumber \\
    \mathcal{Z}_2^{t-1, t} &= Z_2^{t-1} Z_2^{t} & \mathcal{Z}_2^{t, t+1} &= Z_1^{t} Z_2^{t} Z_2^{t+1},
\end{align}
where $\mathcal{Z}_i^{t-1, t}$ ($Z_i^{t}$) denotes the check ($Z$ stabilizer measurement) on logical qubit $i$ at time $t$. 
Similarly, Fig.~\ref{fig:figure_6}(c) and (d) show a logical $\lop{H}$ gate with relevant checks  
\begin{align}
    \mathcal{Z}_{y, x}^{t-1, t} &= Z_{y, x}^{t-1} Z_{y, x}^{t} & \mathcal{X}_{x, y}^{t, t+1} &= Z_{y, x}^{t} X_{x, y}^{t+1}\\
    \mathcal{X}_{y, x}^{t-1, t} &= X_{y, x}^{t-1} X_{y, x}^{t} &  \mathcal{Z}_{x, y}^{t, t+1} &= X_{y, x}^{t} Z_{x, y}^{t+1},
\end{align}
where $Z_{x, y}^{t}$ ($X_{x, y}^{t}$) represent a $Z$ ($X$) stabilizer at coordinates $(x, y)$ at time $t$, and the checks are defined analogously. 
Finally, Fig.~\ref{fig:figure_6}(e) and (f) show a logical $\lop{S}$ gate with relevant checks
\begin{align}
    \mathcal{X}_{x, y}^{t-1, t} &= X_{x, y}^{t-1} X_{x, y}^{t} & \mathcal{X}_{x, y}^{t, t+1} &= Z_{y, x}^{t} X_{x, y}^{t} X_{x, y}^{t+1}\\
    \mathcal{Z}_{y, x}^{t-1, t} &= Z_{y, x}^{t-1} Z_{y, x}^{t} & \mathcal{Z}_{y, x}^{t, t+1} &= Z_{y, x}^{t} Z_{y, x}^{t+1}.
\end{align}

\begin{figure}[t!]
    \centering
    \includegraphics{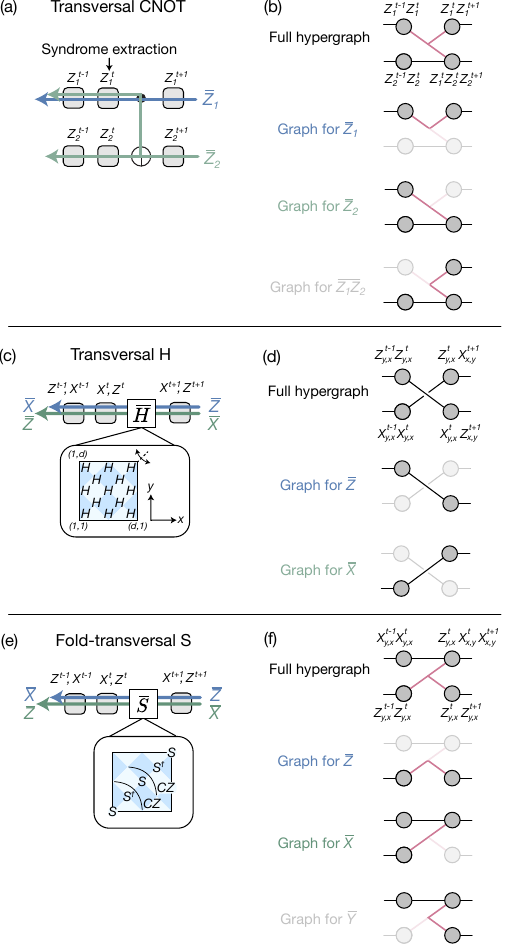}
    \caption{Decoding hypergraphs for transversal Clifford gates.
    We show the transversal $\lcnot$ (a, b), $\lop{H}$ (c, d), and $\lop{S}$ gates (e, f) and their corresponding decoding hypergraphs.
    When restricted to a single logical Pauli product of interest, the hyperedges are reduced to simple edges.
    }
    \label{fig:figure_6}
\end{figure}

Consider decoding a circuit with $m$ reliable Pauli products $\lop P_1, \dots, \lop P_m$.
From the full decoding hypergraph, we can construct a matchable decoding subgraph \mbox{$G_i = (V_i, E_i)$} for the $i$th reliable Pauli product.
To construct the decoding subgraph, we back-propagate $\lop P_i$ through the entire logical circuit (or for a depth equal to the code distance). 
We let $\lop P^t_i = \lop O_1^t \otimes \dots \otimes \lop O_n^t$ denote the logical observable over $n$ logical qubits at time $t$ during this back-propagation, where $\lop O_j^t \in \{\lop X_j, \lop Y_j, \lop Z_j, \lop I_j\}$. 
$V_i$ then contains checks for each $\lop O^t_j$ in the back-propagation.
Concretely, if $\lop O_j^t$ is $\lop X$, $\lop Y$, or $\lop Z$, then the checks  $\mathcal{X}_j^{t-1, t}$, $\mathcal{X}_j^{t-1, t}$ and $\mathcal{Z}_j^{t-1, t}$, or $\mathcal{Z}_j^{t-1, t}$ are included, respectively. 
$E_i$ then contains any errors that flip a check in the decoding subgraph (only their action on the subgraph checks is recorded).
This choice ensures that between these time steps, an error which anti-commutes with the logical observable will also be detected by the corresponding checks~(Lemma~\ref{lemma:complete_subgraph}).
One can verify using Fig.~\ref{fig:figure_7} that the resulting subgraph is always matchable.
Note that the subgraphs of \textit{all} logical Pauli products, even the unreliable ones, are matchable.

\section{Comparison with other strategies for matchable decoding\label{appendix:other_strategy}}

Here we compare our approach with previous strategies for decoding transversal gates with MWPM in Refs.~\cite{wan2024iterative,sahay2024error,beverland2021cost}.
%
%Rather than constructing a matchable subgraph for each reliable logical observable, 
These works decode one logical qubit at a time, then copy the error assignments between these logical qubits iteratively based on how transversal gates propagate physical $X$ and $Z$ errors.
For example, for the transversal $\lcnot$ circuit in Fig.~\ref{fig:figure_7}(a), one would first decode $Z$ stabilizers on the control qubit using MWPM, then commit the error assignments and update the corresponding checks on the target qubit.
For circuits involving multiple transversal $\lcnot$s, Ref.~\cite{wan2024iterative} proposes an iterative strategy in which this process is repeated until convergence.

However, for certain circuits, these approaches are not fault-tolerant with $O(1)$ SE rounds per logical operation. 
This originates from the fact that if the control qubit is initialized in $\lket{+}$, its $Z$ stabilizers cannot be reliably learned without $O(d)$ buffer SE rounds.
This can lead to high-weight erroneous assignments which are then copied onto other qubits.
Consider, for example, the circuit in Fig.~\ref{fig:figure_7}(a), which prepares a Greenberger–Horne–Zeilinger (GHZ) state among three logical qubits then measures $\lop{Z}_2\lop{Z}_3$, which should ideally be $+1$.
We first decode the $Z$ stabilizers ($X$ errors) of the top logical qubit, which is initialized in $\lket{+}$.
The $Z$ stabilizer values of the $\lket{+}$ state are unreliable with only a single SE round, resulting in unreliable error assignments which are then copied over to the remaining logical qubits, as illustrated in Fig.~\ref{fig:figure_7}(c).
A single data qubit error on one of the code patches then leads to the assignments of $\lop{Z}_2$ and $\lop{Z}_3$ differing with a probability that does not decay with the code distance.

\begin{figure}[t!]
    \centering
    \includegraphics{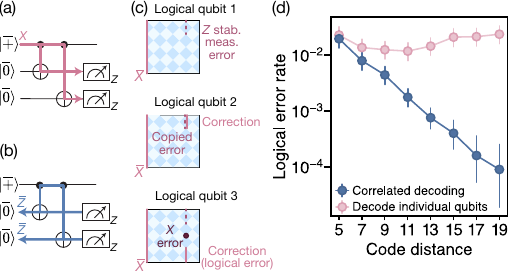}
    \caption{Comparison between decoding strategies for $O(1)$ SE round per gate.
    (a)~When preparing a GHZ state, $X$ errors on the top patch (pink) can propagate onto the $\lop Z$ measurements.
    (b)~We illustrate the back-propagation path of the reliable logical Pauli product $\lop{Z}_2\lop{Z}_3$. 
    (c)~There exists a failure mechanism with only two errors when logical qubits are decoded iteratively. 
    First, a $Z$ stabilizer measurement error on logical qubit 1 causes a string of $X$ errors.
    These errors are then copied onto logical qubits 2 and 3.
    Then, a single $X$ error on logical qubit 3 causes a logical error, as it violates  $\lop{Z}_2\lop{Z}_3$.
    (d)~Numerical simulations confirm that directly decoding the reliable logical Pauli product exponentially suppresses the logical error rate, whereas decoding individual logical qubits iteratively does not.
    }
    \label{fig:figure_7}
\end{figure}

In Figure~\ref{fig:figure_7}(d), we numerically simulate this circuit with rotated surface codes and a single SE round following transversal gates.
We apply circuit-level noise with error probability  $p=0.3\%$.
As expected, the logical error rate is suppressed exponentially in $d$ when decoding the reliable logical Pauli product shown in Fig.~\ref{fig:figure_7}(b) (blue), but is not if one decodes the logical qubits individually and transfers error assignments between them (pink). 

\section{Decoding strategy with software commitments \label{appendix:software_commitments}}
Here we describe in detail the decoding strategy in which the error assignments of the reliable Pauli products are committed in software.
The resulting procedure is given in Algorithm~\ref{alg:iterativedecoding}.
We show that the associated decoding problem is guaranteed to be matchable in the absence of \textit{time-like loops}, or cycles of stabilizer measurement errors which can be constructed from errors in a decoding subgraph.
In circuits with time-like loops, we find examples where the decoding problem is matchable, and other examples that are not directly matchable with our algorithmic procedure, but may be with additional modifications.
Because the committed errors can flip stabilizers on reliable Pauli products which have not yet been decoded, in the following section we prove that the resulting effective noise is local stochastic in an example setting of a transversal $\lcnot$.
Based on these findings and the magic state distillation simulations in Fig.~\ref{fig:figure_4} of
the main text, we conjecture that this procedure should have a threshold as long as the density of such commitment boundaries is sufficiently low. 

For concreteness of discussion, we focus on decoding $Z$ stabilizers in circuits with only transversal $\lcnot$ gates.
We consider a phenomenological noise model with physical $X$ errors on the data qubits directly before SE and $Z$ stabilizer measurement errors.
We also consider the toric code for simplicity due to its lack of boundaries.
General surface code circuits with all error types and $\lop{S}$ and $\lop{H}$ gates can be handled similarly.

Suppose we wish to decode the logical operators $\lop P_1, \dots, \lop P_m$, each associated with a decoding subgraph $G_i = (V_i, E_i)$ (Appendix~\ref{appendix:decoding_hypergraph}). 
Let \mbox{$E^G_i = \{e^G: e\in E_i\}$} denote the subset of hyperedges in the full decoding hypergraph $G=(V, E)$ which can flip checks in $G_i$. 
By Lemma~\ref{lemma:matchable}, each simple edge $e\in E_i$ is a subset of a hyperedge $e^G\in E^G_i$.
If the decoding procedure succeeds, it identifies the error in each $E_i$ up to a \textit{subgraph stabilizer}, defined as an operator that leaves no syndrome in $V_i$ and does not affect the decoded outcome of $\lop P_i$.

We begin by describing the procedure for circuits without time-like loops, e.g.,~as illustrated in Fig.~\ref{fig:figure_8}(a). 
First, decode the subgraph $G_1$ using MWPM. 
This allows us to fix the values of the hyperedges $E^G_1$ in subsequent decoding problems. 
Apply this correction in software by flipping the value of any check incident to an odd number of hyperedges in $E^G_1$ identified to have occurred. 
Finally, decode $G_2$ using the edges $e\in E_2$ such that $e^G\notin E^G_1$. 
At this point, the checks $V_1\cup V_2$ are  satisfied. 
Proceed in this manner until all subgraphs $G_i$ have been decoded. 
Because each hyperedge is only assigned once and then fixed for the rest of the decoding, the total decoding volume is at most the space-time volume of the circuit.

\begin{figure}[ht]
	\centering
    \includegraphics[width=\columnwidth]{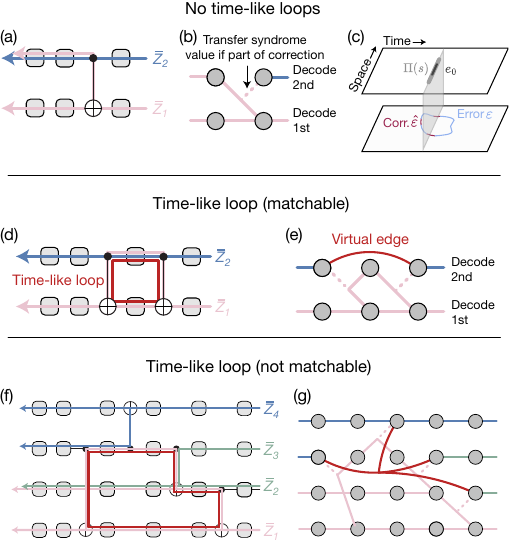}
    \caption{Software commitment strategy.
    (a)~An example of a logical circuit without time-like loops. 
    (b)~The corresponding decoding subgraphs after software commitments. 
    An error which can be transferred between subgraphs during commitment is shown with a dashed line.
    (c)~Data qubit $X$ errors and $Z$ stabilizer measurement errors in the bottom code block, jointly denoted by $\varepsilon$ (blue), and their MWPM correction $\hat\varepsilon$ (magenta).
    By committing these errors during a single transversal $\lcnot$ gate (gray slice), they result in the propagated data qubit $X$ errors $\Pi(s)$ (light gray) in the top code block.
    Note that $\varepsilon+\hat\varepsilon+\Pi(s)$ is an example of a connected cover for a subset $e_0$ of $\Pi(s)$ (dark gray).
    (d)~A logical circuit with a time-like loop (red), and the corresponding decoding hypergraph (e).
    After the pink edges are committed when decoding $\lop Z_1$, the decoding subgraph for $\lop Z_2$ (blue edges) should include a virtual edge (red) due to the time-like loop.
    (f)~A logical circuit, where decoding $\lop Z_1$ first results in a time-like loop (red).
    (g)~The corresponding decoding hypergraph.
    The time-like loop results in the weight-four virtual hyperedge in red, which restricts to an edge when decoding $\lop Z_2\lop Z_3$ (green) or $\lop Z_4$ (blue), but cannot be reduced when decoding $\lop Z_2\lop Z_3\lop Z_4$.
    \label{fig:figure_8}}
\end{figure}

We now discuss the performance of this procedure.
If the error rate is below the threshold in Theorem~\ref{thm:single_product}, $\lop P_1$ will be decoded correctly with high probability.
However, the residual subgraph stabilizer $s\subseteq E_1$ may leave a nontrivial syndrome in $G$ because of the transferred error hyperedges in $E^G_1$  [e.g., the dashed lines in the weight-three hyperedge in~Fig.~\ref{fig:figure_8}(b)]. 
Because $G_1$ does not contain any time-like loops, $s$ must include an even number of hyperedges at any given commitment boundary [Fig.~\ref{fig:figure_8}(c)]. 
Therefore, incorrect check values in subsequent decoding problems come in pairs, and they can be explained by an error $\Pi(s)$ connecting the checks.
As a result, future decoding rounds can see an elevated error rate at the commitment boundary. 
In the following section, we show that for a single CNOT decoded in two iterations, the propagated errors follow a local stochastic noise distribution. 
We conjecture that is true broadly for circuits which do not involve too many commitment boundaries, preventing the buildup of errors. 

Next, we consider the case with time-like loops in the decoding subgraphs [e.g.,~Figs.~\ref{fig:figure_8}(d) and \ref{fig:figure_8}(f)]. 
If the residual subgraph stabilizer $s$ is a time-like loop, it can result in a single incorrect check on a logical space-time block. 
Without knowing about the time-like loop, the flipped check may be matched to a time boundary that is far away in a subsequent decoding problem, leading to a large error and potentially an incorrectly-decoded observable. 
For example, in Fig.~\ref{fig:figure_8}(d), if an error from the red time-like loop flips the top left check, this excitation would be matched to the initialization boundary of the blue subgraph.

As a result, we will introduce ``virtual'' error mechanisms to provide future decoding problems with information about these time-like loops (Fig.~\ref{fig:figure_8}(e), red edge).
Concretely, we observe that any stabilizer $s$ of $G_1$ can be written as a sum $s = s_0 + s_1$, where $s_0$ does not contain any time-like loops and $s_1$ contains only time-like loops. 
The stabilizer $s_0$ is handled in the same way as in the first scenario by propagating an effective error to the later decoding iterations. 
Let $\{L_i\}$ be a basis of time-like loops. 
We introduce virtual error mechanisms $e_{i}$ corresponding to the elements $L_i$. 
More precisely, $e_{i} = \bigcup_{e\in L_i}e^G\setminus e$ is a hyperedge of $G$ consisting of the checks that would be triggered by $L_i$. 
Including $e_{i}$ as part of the matching when decoding subsequent observables indicates that we apply the error $L_i$ (which has no effect on the decoding of $\lop P_1$ since it is a subgraph stabilizer of $G_1$). 
The weights of the virtual hyperedges can be calculated based on the probabilities of the stabilizer loops occurring. 
In general, we expect the probability of $e_{i}$ to scale as $p^{|L_i|/2}$ to leading order, where $p$ is the probability of a measurement error, because at least half of the loop must have incurred an error when we decoded.
These probabilities can be updated based on the syndromes in $G_1$ seen by the decoder during the first decoding iteration.
We conjecture that these propagated errors can again be described with a local stochastic noise distribution, assuming the density of such errors is sufficiently low.
Algorithm~\ref{alg:iterativedecoding} summarizes the resulting decoding procedure.

If $\{L_i\}$ can be chosen so that the resulting virtual edges in future subgraphs all have weight two, then we may decode with MWPM as before. 
For example, this occurs in Fig.~\ref{fig:figure_8}(d), where each element contains at most two weight-three hyperedges.
However, for some circuits, such as the one illustrated in Fig.~\ref{fig:figure_8}(f), this is not possible, as some of the virtual hyperedges have weight at least three. 
In this case, in order to potentially maintain matchability, we would have to employ other methods such as decoding the observables in a different order, or choosing a different basis of reliable Pauli products.

\begin{algorithm}[H]
\caption{Decoding with software commitments}
\label{alg:iterativedecoding}
\textbf{Input:}\\
logical circuit $C$, measurements $\lop P_1, \dots, \lop P_m$ to be decoded, check values $d$, noisy measurement outcomes $\tilde a_1, \dots, \tilde a_m$\\
\textbf{Output:} \\decoded measurement values $a_1, \dots, a_m$
\begin{algorithmic}[1]
    \State $G = (V, E) \gets $ decoding graph from $C$
    \For{$i = 1, \dots m$}
        \State $G_i = (V_i, E_i) \gets $ subgraph for $\lop P_i$, calculated from $G$
        \State Adjust weights of edges in $E_i$, including virtual (hyper)edges, based on error propagation probabilities from decoding iterations $i' < i$ \Comment{Optional}
        \State $e_i \gets \operatorname{MWPM}(G_i, \left. d \right|_{V_i})$ \Comment{Only match using edges that have not already been fixed.}
        \State $a_i \gets $ decoded value of $\lop P_i$ from $\tilde a_i$ and $e_i$
        \State $\tilde e_i \gets \{e^G: e\in e_i\}$
        \State $d \gets d + \sigma(\tilde e_i)$ \Comment{$\sigma$ maps errors to the checks they flip}
        \State $\mathcal L \gets $ basis of physical time-like loops in $G_i$
        \State Add virtual hyperedges to $G$ for each element of $\mathcal L$
    \EndFor
    \State \Return $a_1, \dots, a_m$
\end{algorithmic}
\end{algorithm}

\subsection{Bounding error propagation in a single CNOT \label{app:error_prop}}
Now we prove a bound on the error propagation due to a single transversal $\lcnot$ gate in the toric code. 
We define the propagated error as follows. 
Let $\varepsilon$ be an error affecting the first logical space-time block and $\hat\varepsilon$ be the correction. 
Assuming $p$ is below the phenomenological noise threshold of the toric code, with high probability, $s=\varepsilon+\hat\varepsilon$ is a stabilizer of the first decoding subgraph $G_1$ consisting of homologically trivial cycles in space-time. 
Let $t=0$ be the time of the $\lcnot$, and $s'$ be the cycles that intersect the $t=-0.5$ surface. 
We define the propagated error $\Pi(s)$ by projecting (modulo 2) the qubit errors of $s'$ occurring at $t\ge 0$ (equivalently, $t\le -1$) onto the second logical space-time block. 
Note that there may be shorter equivalent errors that cause the same syndromes as the incorrectly committed hyperedges; nevertheless, we will prove that the error defined in this way is local stochastic.

\begin{theorem}\label{thm:local_stochastic}
Consider a logical circuit consisting of a single $\lcnot$ gate from the first qubit to the second. 
Suppose we decode $\lop Z_1$ followed by $\lop Z_2$, and both measurements are reliable. 
Under a phenomenological noise model of independent $X$ qubit errors and $Z$ measurement errors of sufficiently small strength $p$, the propagated errors from decoding $\lop Z_1$ can be described by a local stochastic noise model of strength $O(p)$.
\end{theorem}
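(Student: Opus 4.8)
The plan is to show that the propagated error distribution $\Pi(s)$ inherits the local stochastic property from the underlying physical noise, by establishing an exponential decay in the probability that $\Pi(s)$ has support on any fixed set of physical qubits in the second block. The key quantitative target is: for any set of qubits $A$ in the second logical space-time block, $\Pr[\Pi(s) \supseteq A] \le (cp)^{|A|/2}$ for some constant $c$, which is the defining property of local stochastic noise of strength $O(p)$.

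\medskip

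\noindent\textbf{Setup and reduction to connected clusters.} First I would recall from the theorem's definition that $\Pi(s)$ is determined by the stabilizer $s = \varepsilon + \hat\varepsilon$ of $G_1$, where $\hat\varepsilon$ is the MWPM correction. Since $s$ is a cycle (closed error set) in the space-time decoding subgraph $G_1$, and since we are below the phenomenological threshold (Theorem~\ref{thm:single_product}), I would invoke the standard percolation-style decomposition: $s$ splits into connected clusters $s = \bigcup_k C_k$, and a cluster of size $\ell$ requires at least $\ell/2$ genuine physical errors from $\varepsilon$ to be present (because MWPM finds a minimum-weight correction, so $|\hat\varepsilon| \le |\varepsilon|$, hence $|s| \le 2|\varepsilon|$ and any connected component must be at least half-covered by real errors). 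This is exactly the \emph{connected cover} structure highlighted in the caption of Fig.~\ref{fig:figure_8}(c). The projection map $\Pi$ acts qubit-locally on the $t\ge 0$ portion of $s'$, so support of $\Pi(s)$ on a qubit set $A$ in the second block forces the presence of a cluster structure in the first block covering the preimage of $A$.

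\medskip

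\noindent\textbf{Probability bound via cluster counting.} Next I would carry out the cluster-counting argument following Refs.~\cite{kovalev2013fault,gottesman2013fault}. For $\Pi(s)$ to cover a fixed set $A$ of $|A|=a$ qubits, the stabilizer $s$ must contain, for each qubit in $A$, a space-time path connecting it back to a point where an actual error occurred (since homologically trivial cycles that reach the $t=-0.5$ commitment surface must be anchored by real errors on at least half their length). Using the bounded vertex degree of the decoding subgraph (Lemma~\ref{lemma:matchable}), the number of connected error configurations of total size $\ell$ containing a given qubit is at most $\lambda^\ell$ for a constant $\lambda$ depending only on the lattice geometry. Summing over all cluster sizes $\ell \ge a$ and using that each such configuration occurs with probability at most $p^{\lceil \ell/2\rceil}$, I would obtain
\begin{equation}
\Pr[\Pi(s)\supseteq A] \;\le\; \sum_{\ell \ge a} \binom{\ell}{a}\lambda^\ell\, p^{\lceil \ell/2\rceil} \;\le\; (c\,p)^{a/2}
\end{equation}
for $p$ sufficiently small, which is precisely the local stochastic bound of strength $O(p)$ (after identifying the effective rate $\sqrt{cp} = O(\sqrt{p})$, or more carefully tracking that the exponent matches $|A|$ up to the factor-of-two already present in the distance statements).

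\medskip

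\noindent\textbf{Main obstacle.} The hard part will be rigorously controlling the \emph{anchoring} step: ensuring that every cycle in $s$ that reaches the commitment surface $t=-0.5$ and thereby contributes to $\Pi(s)$ must be supported on at least half genuine physical errors, even after accounting for the freedom in choosing equivalent shorter representatives of the syndrome. Because $\Pi(s)$ is defined via the actual committed hyperedges rather than a minimum-weight representative, one must show the projected error's probability weight is still governed by the minimum-weight cover. I would handle this by the explicit connected-cover construction sketched in Fig.~\ref{fig:figure_8}(c)---namely that $\varepsilon+\hat\varepsilon+\Pi(s)$ forms a connected cover for any subset $e_0$ of $\Pi(s)$---so that the probability of any given projected error is bounded by the probability of its cheapest anchoring physical error set, recovering the $p^{|A|/2}$ scaling. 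Verifying that this cover is genuinely connected and that no cancellations destroy the bound is the most delicate part of the argument.
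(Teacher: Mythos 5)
Your overall route is essentially the paper's: fix a candidate propagated error $e_0$, restrict to minimal stabilizers $s=\varepsilon+\hat\varepsilon$ whose projection contains it, use MWPM minimality to force at least $|s|/2$ genuine faults, and bound the number of anchoring clusters via the connected-cover lemma (Lemma 5 of Aliferis et al.). The paper even implements your ``preimage of $A$'' device in the way you gesture at: it glues a single-time-slice syndrome graph $H_0$ (which houses $e_0$ and $\Pi(s)$) onto $G_1$ at the $t=-0.5$ surface and counts connected covers $K=s\sqcup\Pi(s)$ of $e_0$ in the combined graph $H$, so the anchoring issue you identify as the main obstacle is resolved exactly by the $\varepsilon+\hat\varepsilon+\Pi(s)$ cover you propose.

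There is, however, a genuine quantitative gap: your sum starts at cluster size $\ell\ge a$ and therefore only yields $\Pr[\Pi(s)\supseteq A]\le(cp)^{|A|/2}$, i.e., local stochastic noise of strength $O(\sqrt p)$, whereas the theorem asserts strength $O(p)$. The parenthetical hope that the exponent ``matches $|A|$ up to the factor of two'' is precisely the step that needs an argument, and it is where the paper does real work. The missing ingredient is a lower bound on the anchoring cluster size that beats $|A|$: because $s$ is a union of closed space-time loops and $\Pi(s)$ collects only their qubit errors on the $t\ge 0$ side, the loops must close up on the other side of the commitment surface, so $|s|$ is bounded below by a constant multiple of $|\Pi(s)|$ and the cover satisfies $|K|\ge 3|e_0|$. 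Feeding $\ell\ge 3w$ together with $\Pr(s)\le p^{|s|/2}2^{|s|}\le p^{|K|/3}2^{2|K|/3}$ into the geometric series produces a leading term $p^{w}$ rather than your $p^{w/2}$, which is what upgrades the strength from $O(\sqrt p)$ to the claimed $O(p)$. A secondary omission: your count $\binom{\ell}{a}\lambda^\ell$ should also carry the $2^{|s|}$ factor for choosing which subset of the cluster is the genuine error $\varepsilon$ as opposed to the correction $\hat\varepsilon$; it only affects constants, but it must appear for the union bound over physical error configurations to be valid.
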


\begin{proof}
For a given propagated error $e_0$, we consider all stabilizers $s$ that project to a superset of $e_0$. Without loss of generality, we may count only the stabilizers that are minimal, i.e., such that $e_0$ would not be contained in the projection if we removed any cycle from $s$.
To bound the probability of all such $s$, we will count clusters in a modified syndrome adjacency graph. Let $H_0$ by the syndrome adjacency graph of a single toric code time slice (with only qubit errors). 
Let $H$ be the syndrome adjacency graph of $G_1$ but with $H_0$ attached at the $t=-0.5$ time slice. 
We connect each qubit error of $H_0$ with the two adjacent measurement errors of $G_1$ at $t=-0.5$. Let $z$ be the degree of the graph $H$.
We consider $e_0$ and $\Pi(s)$ as subsets of vertices in $H_0$ and let $K = s \sqcup \Pi(s)$ be the subset of vertices in $H$. 
By definition of the error propagation, we have $|s| \ge \frac 2 3 |\Pi(s)|$. Let $\ell = |K|$ and $w = |e_0|$.

By minimality of $s$, the set $K$ is a \emph{connected cover} of $e_0$, meaning that it is a union of connected components in $H$, each of which contains an element of $e_0$. 
Otherwise, we could remove a component of $s$ that is disjoint from $e_0$. By Lemma 5 in Ref.~\cite{aliferis2007accuracy}, there are at most $e^{\ell - 1}z^{\ell - w}$ choices of connected covers $K$ of size $\ell$. 
These choices cannot all be decomposed as $s\sqcup \Pi(s)$, but all valid choices of $s$ that are minimal are included as one such connected cover.
For any $K$, we may recover $s$ as the restriction of $K$ to the errors in $G_1$. We use a union bound over all possible $s$ to control the probability of $e_0$.
\begin{align}
    \Pr(e_0) &\le \sum_{\substack{s: \Pi(s)\supseteq e_0\\s \text{ minimal}}}\Pr(s) \nonumber\\
    &\le \sum_{\substack{s: \Pi(s)\supseteq e_0\\s \text{ minimal}}} p^{|s|/2}2^{|s|} \nonumber\\
    &\le \sum_{K: \text{connected cover of } e_0}p^{|K|/3}2^{2|K|/3} \nonumber\\
    &\le \sum_{\ell=3w}^\infty e^{\ell-1}z^{\ell-w}p^{\ell/3}2^{2\ell/3} \nonumber\\
    &= \frac{2^{2w}e^{3w-1}z^{2w}p^w}{1 - 2^{2/3}ezp^{1/3}} \nonumber\\
    &= O\left((p/p_0)^w\right).
    \end{align}
The bound holds when $p < p_0/z$, where $p_0 = 1/(4e^3z^2)$. 
In the second inequality, we used the fact that the error weight must be at least half of the weight of the stabilizer $s$ by MWPM and that there are at most $2^{|s|}$ ways to choose such subsets of $s$. 
The third inequality can be seen by keeping only the terms where $K$ corresponds to a valid decomposition $s\sqcup \Pi(s)$ combined with the fact that $|s| \ge \frac 2 3 |\Pi(s)|$.
\end{proof}

\section{Details of the numerical simulations\label{appendix:numerics}}
Here we provide full details and additional benchmarking of the numerical simulations in the main text.
All circuit simulations and decoding hypergraphs were constructed using Stim~\cite{gidney2021stim}.
The circuit Stim files are available at Ref.~\cite{cain2025zenodo}.

\begin{figure*}[t!]
    \centering
    \includegraphics{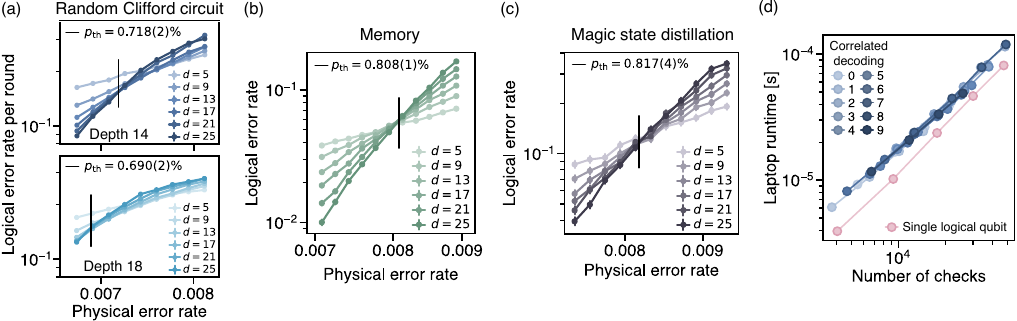}
    \caption{Benchmarking the performance and runtime.
    (a)~We show the MWPM thresholds for the Clifford circuit described in the main text (top), and for a depth-18 Clifford circuit randomly sampled from the same distribution.   
    (b)~A similar threshold is obtained for a single-qubit memory.
    (c)~We additionally show the threshold for the magic state distillation circuit described in the main text.
    (d)~The laptop run time for MWPM is proportional to the decoding volume (number of checks in the decoding subgraph) for both a single-qubit memory (pink) and correlated decoding of the 10 logical Pauli products of a random depth-10 Clifford circuit (blue). 
    We show data for code distances $d\in\{13, 17, 21, 25, 29\}$ for the single-qubit memory, and $d\in\{5, 9, 13, 17, 21, 25\}$ for the random Clifford circuit. 
    \label{fig:figure_9}
    }
\end{figure*}

Figure~\ref{fig:figure_3} in the main text explores a random transversal Clifford circuit on unrotated surface codes.
We first prepare half of the codes randomly in $\lket{0}$ or $\lket{+}$ by initializing the physical qubits in $\ket{0}$ or $\ket{+}$. 
We do not measure the surface code stabilizers as part of state preparation, as it is not necessary for fault tolerance (see Sec.~\ref{subsec:fault_tolerance}, main text).
Instead, we immediately begin applying the layers of transversal gates, which are implemented as illustrated in Fig.~\ref{fig:figure_7}. 
Each layer is followed by a single SE round, using the gate schedule in Ref.~\cite{acharya2023suppressing}. 
We decompose all logic gates and SE rounds into physical $Z$-basis measurement and reset gates, CNOT gates, or single-qubit gates ($H, S,$ and $S^\dagger$). 
We employ a circuit-level noise model with single-qubit depolarizing channels applied with probability $p$ during resets and measurements, and a two-qubit depolarizing channel with probability $p$ on each two-qubit gate.
After the final layer of transversal gates, we apply a noiseless round of SE in both the $X$ and $Z$ bases, then noiselessly measure all of the reliable logical Pauli products using multi-qubit Pauli product measurements. 
We set the checks using the procedure described in Appendix~\ref{appendix:decoding_hypergraph} and Fig.~\ref{fig:figure_7}.
The checks for the first SE round are set from the deterministic $+1$ stabilizers at initialization (or products of stabilizers, if the $+1$ stabilizers are copied between logical qubits during the first transversal $\lcnot$ gate layer).

Fig.~\ref{fig:figure_9}(a) shows the threshold of the depth-14 random Clifford circuit described in the main text (top) and a depth-18 circuit sampled from the same distribution (bottom).
To benchmark the decoding strategy, Fig.~\ref{fig:figure_9}(b) shows the threshold of a single-qubit memory with $d$ SE rounds in both bases and a layer of idling depolarizing noise between SE rounds.
By fitting distances $\geq 13$ to a universal scaling formula~\cite{wang2003confinement, watson2014logical}, we extract thresholds of $\pth = \pthclifforddeep$ and $\pth = \pthclifforddeeper$ for the depth-14 and depth-18 Clifford circuits, respectively, and $\pth = \pthmemory$ for the single-qubit memory.
The similarity of the random Clifford circuit thresholds suggests they are not sensitive to effects from the finite circuit depth, and are only modestly reduced compared to that of the memory circuit, consistent with the findings in Ref.~\cite{cain2024correlated}.

In Fig.~\ref{fig:figure_9}(d), we benchmark the decoding run time for a randomly sampled depth-10 Clifford circuit drawn from the same distribution as above.
We observe that the run time for each of the 10 reliable Pauli products scales approximately linearly with their decoding volume for different code distances, identical to the scaling of a single logical qubit.
Therefore, the run times for decoding transversal algorithms and a single logical qubit are comparable when normalized by the decoding volume.

The simulations for Fig.~\ref{fig:figure_4} in the main text are on rotated surface codes using the same gate set, stabilizer measurement circuit, and error model as Fig.~\ref{fig:figure_3}.
We prepare the distilled $\lket{S}$ patches with an injected error probability $p$ and a perfect round of SE, followed by two noisy rounds of SE to reach noise equilibrium. 
We decode the factory in three stages, and we apply feed-forward $\lop{Z}$ gates based on the decoded results.
To probe the fidelity of the output $\lket{S}$ state, we teleport a noiseless $\lop{S}$ gate then measure the output qubit in the $X$ basis. 
Furthermore, the decoding hypergraph construction for the circuit is modified so that errors can be committed in software.
We first construct the full decoding hypergraph, then remove checks not involved in the decoding subgraph of interest. 
This enables the action of each error on checks from different subgraphs to be recorded for software commitments.
To decode, we combine any error mechanisms with the same syndrome pattern when restricted to the decoding subgraph.
When committing these indistinguishable errors, the representative which flips the fewest checks on other reliable Pauli products is committed.
Fig.~\ref{fig:figure_9}(c) shows the resulting threshold for magic state distillation, extracted using the same methodology as the previous circuits.

\section{Proof details \label{appendix:proof}}
In this appendix, we provide the detailed proofs of some of the lemmas from the main text.

\begin{proof}[Proof of Lemma~\ref{lemma:complete_basis}]
We show this lemma by inductively constructing the basis $V$.
We will show that any column $\vec{v}_i$ that is not a reliable logical product must anti-commute with some logical Pauli stabilizer $s_i$, and that for different such columns, the set of anti-commuting logical Pauli stabilizers $\{s_i\}$ is linearly independent.
If this is true, then we can find a logical Pauli stabilizer that only flips the $i$th unreliable logical product: since this should not change the measurement distribution, we can conclude that its result is independent from other columns and must always be 50/50 random.

For the base case, consider the first logical Pauli measurement $\lop P_1$.
By Definition~\ref{def:reliable_product}, if $\lop P_1$ is not reliable, then it must anti-commute with some logical Pauli stabilizer, satisfying the condition above with $V=(1)$.

For the inductive case, suppose we have a full-rank matrix of basis vectors for the first $\nmeas$ measurements $V_\nmeas$ satisfying the conditions.
With the $(\nmeas+1)$th Pauli measurement $\lop P_{\nmeas+1}$, we update the basis as follows:
\begin{itemize}
    \item Extend the first $\nmeas$ columns of $V_\nmeas$ to length $\nmeas+1$, with 0 in the new row.
    \item If there exists a subset $S$ of earlier measurements (possibly empty), such that $\lop P_{\nmeas+1} \times \prod_{s\in S} \lop P_{s}$ is a reliable logical Pauli product, then include this product in the basis by setting the last column $\vec{v}_{\nmeas+1}$ to be 1 at rows in $S$ and the $(\nmeas+1)$th row, and 0 otherwise.
    \item Otherwise, include $\lop P_{\nmeas+1}$ in the basis by setting the last column $\vec{v}_{\nmeas+1}$ to be 1 at the $(\nmeas+1)$th row, and 0 otherwise.
\end{itemize}

The matrix $V_{\nmeas+1}$ constructed recursively this way will be upper triangular, with all diagonal elements being 1, so it is clearly full rank.
Now we show that it also satisfies the condition on unreliable Pauli products.
Suppose this is not the case, then there exists a set of logical Pauli stabilizers $\{s_i\}$, each anti-commuting with a different unreliable logical Pauli product $\vec{v_i}$, that is linearly dependent; hence, a subset of the $\{s_i\}$ product to the identity.
This subset must involve the newly added measurement, since otherwise it violates the induction hypothesis.
However, taking the product of the corresponding logical measurements results in a logical Pauli product involving $\lop P_{\nmeas+1}$ that commutes with all logical Pauli stabilizers and is therefore reliable, contradicting the construction of $\vec{v}_{\nmeas+1}$.
Therefore, the condition on unreliable Pauli products is satisfied, completing our proof.
\end{proof}

\begin{proof}[Proof of Theorem~\ref{thm:single_product}]
We follow the procedure in Ref.~\cite{gottesman2013fault} to bound the logical error rate.

Consider the syndrome adjacency graph, for which vertices $e_i$ are error events included in the decoding subgraph, and there is an edge $(e_i,e_k)$ between vertices if they both trigger the same check.
By Lemma~\ref{lemma:matchable}, each error is involved in at most two checks, each of which has a bounded vertex degree, so the vertices in the syndrome adjacency graph also have degree bounded by some constant $z$.
Errors and inferred corrections form undetectable clusters on the syndrome adjacency graph.
We will analyze properties of these connected clusters to bound the logical error rate.

We can lower bound the number of vertices in a connected cluster required for a logical error to occur.
For a fault cluster $f$ and logical Pauli product $\lop{P}$, we can propagate both of them back through the circuit until they are only supported on state initialization, resulting in operators $\tilde{f}$ and $\tilde{P}$.
Since we are propagating both faults and logical Pauli products through the same circuit, $f$ will lead to a logical error if and only if $\{\tilde{f},\tilde{P}\}=0$.
By Lemma~\ref{lemma:reliable_init}, all initial stabilizers in the same basis as $\tilde{P}$ have known initial values.
Any error cluster that can cause a logical error on $\tilde{P}$ must be in the opposite basis, and by the distance of the code, we have a lower bound on the weight $|\tilde{f}|\geq d$.
By the transversal structure of the circuit, in which a given error can only spread to at most two other spatial locations, this implies that $|f|\geq d/2$.

By Lemma 5 in ~\cite{aliferis2007accuracy} and Lemma 2 in~\cite{gottesman2013fault}, the number of clusters with weight $w$ that contain any given error is upper bounded by $(ze)^{w-1}$.
The number of physical error locations in the decoding subgraph is $n^{\mathrm{sub}}_{\mathrm{loc}}$.
Since the MWPM decoder is able to identify the minimum-weight error, the weight of the correction is upper bounded by the number of physical errors in each cluster.
Therefore, at least $w/2$ physical errors must have occurred, with a probability upper bounded by $p^{w/2}$.
For a cluster of size $w$, there are at most $2^w$ ways to choose subsets that correspond to the physical error configuration.
This allows us to bound the logical error rate $\plogical$ on the Pauli product as
\begin{align}
\plogical &\leq O(n^{\mathrm{sub}}_{\mathrm{loc}}) \sum_{w\geq d/2}(ze)^{w-1} 2^w p^{w/2} \nonumber\\&= O\qty(\frac{n^{\mathrm{sub}}_{\mathrm{loc}}}{ze} \frac{\left( 2ze \sqrt{ p } \right)^{d/2}}{1-2ze\sqrt{ p }})\nonumber\\
&=O\qty(n^{\mathrm{sub}}_{\mathrm{loc}}\qty(\frac{p}{p_0})^{d/4}),
\end{align}
when $p < p_0$, where $p_0=1/(2ze)^2$, as desired.
\end{proof}

\begin{lemma}[Reduced number of syndrome rounds]
Consider a surface code transversal Clifford circuit with one SE round per logical operation, denoted $\mathcal{C}$, and the decoding subgraph for any reliable logical Pauli product $\lop P$.
In this subgraph, two stabilizer measurements are called adjacent if they are involved in the same check.

Suppose we remove some of the stabilizer measurements to form the circuit $\mathcal{C}'$, such that the maximal cluster of adjacent measurements removed involves $r$ measurements.
Then for decoding $\lop P$ in the circuit $\mathcal{C}'$, we can construct a decoding subgraph with edge degree at most two, and vertex degree at most $5r+7$.
\label{lemma:fewer_se}
\end{lemma}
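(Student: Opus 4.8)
The plan is to reduce the sparsified problem to the fully-measured one by treating each maximal cluster of removed measurements as a single effective time step. Starting from the full-circuit subgraph of Appendix~\ref{appendix:decoding_hypergraph}, I would back-propagate $\lop P$ through $\mathcal{C}'$ to obtain the instantaneous logical operator $\lop P^t$ at each retained round, and then, wherever a cluster of $r$ adjacent removed measurements appears, collapse the corresponding run of checks into a single \emph{bridging check}: the product of the surviving stabilizer measurement just before the gap with the one just after, with the composition of all intervening transversal gates absorbed into the back-propagation. Because this composition is still a Clifford, the bridging check is deterministically $+1$ in the noiseless case, and the gap behaves exactly like one effective time step of a (composed) transversal gate, so the sparsified subgraph has the same formal structure as an ordinary subgraph.

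Given this construction, the edge-degree bound follows by the same mechanism as Lemma~\ref{lemma:matchable}. Tracking a single instantaneous logical operator through the composed gate makes the bridging check behave like the time-like check of a single-qubit memory: the before-gap and after-gap stabilizers each still appear in only the two checks temporally adjacent to them, so a measurement flip flips at most two checks. For data-qubit errors---including those inserted at an intermediate removed round---the restriction to the checks that track $\lop P^t$ reduces their hyperedges to weight two, exactly as the weight-three hyperedge of Fig.~\ref{fig:figure_2}(b) collapses to a simple edge once one decodes only the operator of interest. Hence the sparsified subgraph remains matchable.

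The substance of the lemma is the vertex-degree bound, which I would establish by counting the elementary error mechanisms incident to a single bridging check. Back-propagating the check's defining stabilizer product step by step through the gap, I would enumerate the single-qubit $X$/$Z$ errors inserted before each intervening SE round that anti-commute with the instantaneous back-propagated stabilizer, together with the measurement flips at the two endpoints. Since each transversal Clifford spreads a stabilizer onto a bounded number of qubits (at most one additional patch for $\lcnot$, and within a patch for the fold-transversal $\lop H$ and $\lop S$), each of the removed rounds contributes only a fixed number of new error locations; summing these per-round contributions over the $r$ removed measurements and adding the constant endpoint and boundary contributions yields the linear bound $5r+7$.

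The main obstacle is the bookkeeping in this last count: during a $\lcnot$ a $\lop Z$ stabilizer on the target picks up the control stabilizer, while $\lop H$ and $\lop S$ mix the $\lop X$ and $\lop Z$ sectors, so a single physical error inserted inside the gap can be propagated and shared across patches or bases before it reaches the bridging check, potentially inflating both the edge and vertex degrees. I expect the cleanest route is to track, round by round along the gap, the support of the back-propagated stabilizer product and to bound the incremental number of anti-commuting single-qubit error locations it acquires; this simultaneously confirms that the restricted hyperedges stay weight two and pins down the coefficient $5$, while the additive $7$ absorbs the two endpoint measurement errors and the worst-case boundary geometry of the unrotated surface code.
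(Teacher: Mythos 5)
Your overall strategy---collapse the checks spanning a gap of removed measurements into a single product check, observe that each surviving measurement still appears in at most two checks so time-like edges stay simple, and bound the degree growth linearly in the cluster size---is the same as the paper's. However, the step you yourself flag as the main obstacle, namely deriving the coefficients in $5r+7$, is where your plan diverges and is left genuinely incomplete. You propose to back-propagate the bridging check's stabilizer round by round through the composed Clifford and enumerate anti-commuting single-qubit error locations, worrying that propagation across patches and bases could inflate the degrees. This bookkeeping is unnecessary: the merged check is literally the product of the old checks, so the set of error mechanisms incident to it is contained in the union of the edge sets of the old vertices (errors flipping both old checks cancel and can only \emph{decrease} the degree; no new error mechanisms appear). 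The paper's count is then purely graph-theoretic: in the one-SE-round-per-gate subgraph each check has degree at most $7$ (at most four data-qubit error edges plus three measurement-error edges, per the check conventions of Appendix~\ref{appendix:decoding_hypergraph}), a removed measurement appears in exactly two checks by Lemma~\ref{lemma:matchable}, and merging those two checks deletes the shared measurement-error edge, which contributed twice to the summed degree, so each merge adds at most $7-2=5$. Iterating over the $r$ measurements in a maximal adjacent cluster gives $7+5r$. Without this (or an equivalent) accounting, your proposal asserts the bound rather than proving it.

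A second, smaller issue: your ``bridging check = product of the surviving measurement just before the gap with the one just after'' presumes the removed cluster is a linear chain in time on a single stabilizer track. Adjacency here is defined via shared checks, and during a transversal $\lcnot$ a check on the target involves stabilizers of both patches, so a cluster of removed measurements can branch across logical qubits; the collapsed check is then a product over all surviving measurements on the cluster's boundary, not just two. The paper's iterative pairwise merging handles this automatically, and your argument goes through once restated that way, but the ``single effective time step'' picture as written does not cover the branching case.
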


\begin{proof}
If we consistently use the check convention in Appendix~\ref{appendix:decoding_hypergraph}, then for the circuit $\mathcal{C}$, each check (vertex in the decoding hypergraph) has edge degree at most seven in our simplified error model from Sec.~\ref{sec:proof}: at most four error edges from data qubit errors happening before the syndrome measurement, and at most three error edges from measurement errors of the stabilizer measurements that constitute the check.
For the circuit $\mathcal{C}'$ with fewer SE rounds, we can start from the circuit $\mathcal{C}$ with a single SE round per gate and merge checks.
As shown in Lemma~\ref{lemma:matchable}, within the decoding subgraph of any reliable logical Pauli product $\lop P$ for a circuit $\mathcal{C}$, each measurement is only involved in two checks.

By replacing two checks triggered by a measurement error with a single check formed by their product, and by connecting any edges that were linked to either old vertex to the new vertex, we can remove a measurement from the decoding problem, while constructing a decoding graph that maintains the property of only having simple edges in the time direction.
Therefore, the subgraph remains matchable regardless of the frequency of SE.
The degree of the new vertex is upper bounded by the sum of the vertex degrees of the old vertices, minus the shared edge between the old vertices, which contributed twice to the original degree.
Therefore, each merge increases the vertex degree by at most five.
Since the maximal cluster of adjacent measurements removed is size $r$, the maximal vertex degree is at most $5r+7$.
\end{proof}

\begin{lemma}[Union bound on logical errors in transformed basis]
\label{lemma:union}
Consider a probability distribution $Q(\vec{x})$ over $\nmeas$ binary random variables $\vec{x}=(x_1,x_2,...,x_\nmeas)$, and a full rank matrix $V\in \mathbb{F}_2^{\nmeas\times\nmeas}$ representing a basis transformation of the random variables.
Suppose an error (possibly correlated) of probability at most $p$ is applied to each element of $V\vec{x}$, resulting in a new distribution $Q'(\vec{x})$.
Then the total variation distance between the original and erroneous distribution is upper bounded by $||Q(\vec{x})-Q'(\vec{x})||_{TV}\leq \nmeas p$.
\end{lemma}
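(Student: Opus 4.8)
The plan is to prove this by the standard coupling characterization of total variation distance, reduced via a union bound, exploiting that a full-rank $V\in\mathbb{F}_2^{\nmeas\times\nmeas}$ is an invertible linear map and hence a bijection of $\mathbb{F}_2^{\nmeas}$ onto itself. First I would make the construction of $Q'$ explicit: a sample from $Q'$ arises by drawing $\vec{x}\sim Q$, forming $\vec{y}=V\vec{x}$, adding a (possibly correlated, possibly $\vec{x}$-dependent) error vector $\vec{f}$ with marginals $\Pr[f_i=1]\le p$ to obtain $\vec{y}'=\vec{y}+\vec{f}$, and transforming back via $\vec{x}'=V^{-1}\vec{y}'=\vec{x}+V^{-1}\vec{f}$. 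Thus $Q'$ is precisely the law of $\vec{x}+V^{-1}\vec{f}$.

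Next I would set up the natural coupling of $Q$ and $Q'$: sample the joint pair $(\vec{x},\vec{f})$ once, and define $A=\vec{x}$ and $B=\vec{x}+V^{-1}\vec{f}$. By construction $A\sim Q$ and $B\sim Q'$, so the coupling inequality gives
\begin{equation}
\|Q-Q'\|_{TV}\le \Pr[A\neq B]=\Pr[V^{-1}\vec{f}\neq \vec{0}].
\end{equation}
Here the invertibility of $V$ is the crucial structural input: since $V^{-1}$ has trivial kernel over $\mathbb{F}_2$, we have $V^{-1}\vec{f}=\vec{0}$ if and only if $\vec{f}=\vec{0}$, so the two coupled samples agree exactly when no error bit fires.

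Finally I would close the bound with a union bound over the $\nmeas$ error components, which holds irrespective of any correlations among the $f_i$ or their dependence on $\vec{x}$, since only the marginals enter:
\begin{equation}
\Pr[\vec{f}\neq\vec{0}]=\Pr\Big[\textstyle\bigcup_{i=1}^{\nmeas}\{f_i=1\}\Big]\le \sum_{i=1}^{\nmeas}\Pr[f_i=1]\le \nmeas p,
\end{equation}
which yields $\|Q-Q'\|_{TV}\le \nmeas p$ as claimed.

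I do not expect a substantive obstacle: this lemma is essentially a bookkeeping step translating the per-component error probabilities in the $V\vec{x}$ basis into a distributional bound in the original basis. The only point requiring genuine care is confirming that the coupling remains valid when the errors are correlated or depend on $\vec{x}$; this is handled by sampling $(\vec{x},\vec{f})$ jointly and never invoking independence, so that the entire argument rests solely on the bijectivity of $V$ and the marginal bounds $\Pr[f_i=1]\le p$.
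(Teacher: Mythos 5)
Your proof is correct and follows essentially the same route as the paper's: a union bound giving $\Pr[\vec{f}\neq\vec{0}]\leq \nmeas p$, invertibility of $V$ to conclude $\vec{x}'=\vec{x}$ whenever no error fires, and the observation that the distributions can only differ on that event. The paper phrases the last step informally ("deviations in the distribution can only occur in the remaining $\nmeas p$ probability"), while you make the underlying coupling inequality explicit, but the argument is the same.
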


\begin{proof}
Denote the vector $\vec{y}=V\vec{x}$, and the vector with noise applied $\vec{y}'=\vec{y}\oplus \vec{e}$.
Each element of $\vec{e}$ has probability at most $p$.
Applying the union bound, we have that $P(\vec{e}\neq 0)\leq mp$.
Since the matrix $V$ is full rank, we can invert the vector $\vec{y}'$ to obtain a vector $\vec{x}'$ in the original basis.
With probability $1-\nmeas p$, no error vector was applied, so we recover the same vector $\vec{x}'=\vec{x}$, implying that deviations in the distribution can only occur in the remaining $\nmeas p$ probability.
This implies the upper bound $||Q(\vec{x})-Q'(\vec{x})||_{TV}\leq \nmeas p$.
\end{proof}

\section{Throughput and latency estimates for alternative schemes}
\label{appendix:surgery}
Here we estimate the run times of modular decoding with lattice surgery on the circuits numerically explored in the main text. 
To determine the run times of the magic state distillation circuit in Fig.~\ref{fig:figure_4}, we leverage existing analyses of lattice-surgery-based magic state distillation, specifically Sec.~VII.E of Ref.~\cite{bombin2023modular}.
This approach consists of logical blocks (vertices) connected via lattice surgery (edges), and enables decoding arbitrarily-large computations in only two stages, where edges are first decoded with a sufficiently large buffer region and committed, and then vertices are decoded.
Following Ref.~\cite{bombin2023modular}, we choose the edges to have width $d$, so that different edges are sufficiently separated and can be committed simultaneously.
Using a buffer region of width $d/2$, the edges can be decoded in parallel by considering a decoding problem of size $d^3$.
After committing to the center of the edges, the vertices can then be decoded, where a vertex with $s$ edges connected will require a decoding problem of size at least $sd^3/2$.

With these considerations, we estimate the total amount of decoding work (total computational run time across different parallel cores) and decoding latency (also known as depth or span in parallel computing) for the lattice-surgery-based magic state distillation procedure in Fig.~13 of Ref.~\cite{bombin2023modular}.
We denote the time required to decode a depth $d$ logical memory experiment of volume $d^3$ as $T_0$, which sets the units for our run time estimation.
We assume that the decoding time scales linearly with the problem size, which has been found to be a good approximation in the regime of low physical error rates~\cite{higgott2025sparse,wu2023fusion}.
The maximal vertex degree in a magic state factory is 7, resulting in a decoding latency of $T_0$ for the first stage, and $7T_0/2$ for the second stage, for a combined latency of $9T_0/2$.
There are 46 edges, each of which will be covered twice (once during edge decoding, once during vertex decoding), so the total amount of decoding work across all parallel cores will be $92T_0$.
Here, we have neglected the volume of the vertices themselves, so our estimations represent a lower bound for this particular distillation factory construction.
We note that alternative choices of factory construction, edge width and buffer size~\cite{fowler2018low,gidney2019efficient} may lead to different trade-offs in decoding work and depth, which can be further analyzed in future work.
However, our qualitative conclusions should apply regardless of the details of the factory construction.

We can perform a similar analysis for the random transversal Clifford circuit in Fig.~\ref{fig:figure_3}.
Here, we adopt the strategy described in Refs.~\cite{sahay2024error,wan2024iterative,beverland2021cost}, in which transversal gates are separated by $d$ SE rounds, and for each transversal gate, one logical qubit is decoded before the other.
A modular decoding approach can then be applied, in which we first decode the $d$ SE rounds between each transversal gate and commit the correction in the center.
We can then decode each of the individual transversal gates in an ordered fashion.
For the random Clifford circuit in Fig.~\ref{fig:figure_3}(a), consisting of 10 surface code logical qubits and depth 14, the latency is $3T_0$, where the first stage of decoding the $d$ rounds requires time $T_0$ and the second stage of ordered decoding requires time $2T_0$.
The total amount of work is $280T_0$, which is twice the total decoding volume of SE, since we solve the decoding problem twice, one at each stage.

\bibliography{main.bib}

\end{document}